\DeclarePairedDelimiter\set{\{}{\}}
\DeclarePairedDelimiter\abs{\lvert}{\rvert}
\def\to{\ensuremath{\rightarrow}}
\def\sL{\ensuremath{\nwarrow}} 
\def\sR{\ensuremath{\nearrow}}
\def\Oh{\ensuremath{\mathcal{O}}}
\def\skel{\ensuremath{\mathit{skel}}} 
\newcommand{\cF}{{\mathcal{F}}}
\newcommand{\cT}{{\mathcal{T}}}
\DeclareMathOperator{\n}{\mathrm{n}}
\newtheorem{corollary}[theorem]{Corollary}
\newtheorem{observation}[theorem]{Observation}
\newcommand{\etal}{{et~al.}}
\begin{document} 

\doi{10.7155/jgaa.00587}
\Issue{26}{1}{171}{198}{2022} 
\HeadingAuthor{Klawitter and Mchedlidze} 
\HeadingTitle{Upward planar drawings with two slopes}
\title{Upward planar drawings with two slopes}
\authorOrcid[first]{Jonathan Klawitter}{jo.klawitter@gmail.com}{0000-0001-8917-5269}
\affiliation[first]{University of Würzburg, Germany}
\authorOrcid[second]{Tamara Mchedlidze}{t.mtsentlintze@uu.nl}{0000-0002-1545-5580}
\affiliation[second]{Utrecht University, The Netherlands}
\submitted{November 2021}%
\reviewed{March 2021}%
\revised{April 2021}%
\accepted{May 2021}%
\final{May 2021}%
\published{June 2021}%
\type{Regular paper}%
\editor{G. Liotta}%

\maketitle

\begin{abstract}
In an upward planar 2-slope drawing of a digraph, 
edges are drawn as straight-line segments in the upward direction 
without crossings using only two different slopes. 
We investigate whether a given upward planar digraph admits such a drawing 
and, if so, how to construct it.
For the fixed embedding scenario, we give a simple characterisation 
and a linear-time construction by adopting algorithms from orthogonal drawings.
For the variable embedding scenario, we describe a linear-time algorithm for single-source digraphs, 
a quartic-time algorithm for series-parallel digraphs, 
and a fixed-parameter tractable algorithm for general digraphs. 
For the latter two classes, we make use of SPQR-trees and the notion of upward spirality.
As an application of this drawing style, 
we show how to draw an upward planar phylogenetic network with two slopes 
such that all leaves lie on a horizontal line. 
\end{abstract} 

\section{Introduction} \label{sec:introduction} 
When we visualize directed graphs (digraphs for short) that model hierarchical relations
with node-link diagrams,
we traditionally turn edge directions into geometric directions by letting each edge point upward.
Aiming for visual clarity, we would like such an upward drawing to be planar, 
that is, no two edges should cross~\cite{DETT99}.
If this is possible, the resulting drawing is called \emph{upward planar drawing}; see \cref{fig:introExample}~(a).
Interestingly, as Di Battista and Tamassia~\cite{DT88} have shown,
every upward planar drawing can be turned into one where each edge is drawn with a single line segment;
such a \emph{straight-line drawing} may however require an exponentially large drawing area~\cite{DTT92}.

\begin{figure}[tb]
  \centering
  \includegraphics{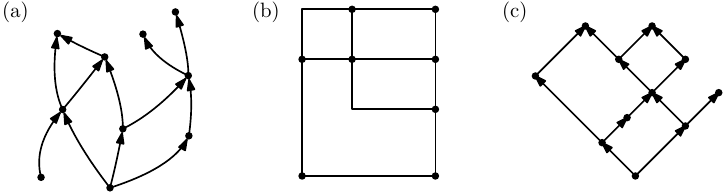}
  \caption{(a) An upward planar drawing; (b) an orthogonal drawing; (c) an upward planar 2-slope drawing.}
  \label{fig:introExample}
\end{figure}

Another important class of drawings are \emph{(planar) orthogonal drawings},
where edges are drawn as sequences of horizontal and vertical line segments~\cite{DETT99}; see \cref{fig:introExample}~(b).
This drawing style is commonly used for schematic drawings such as VLSI circuit design and UML diagrams.
Schematic drawings that allow more than two slopes for edge segments include hexalinear and octilinear drawings,
which find application in metro maps~\cite{NW11}.
In general, the use of only few geometric primitives (such as different slopes)
in a graph drawing facilitates a low visual complexity; a common quality measure for drawings~\cite{Sch15}. 

In recent years, the interest in upward planar drawings that use only few different slopes has grown.
For example, among other results, Bekos \etal~\cite{BDDLM18} showed
that every so-called bitonic $st$-graph with maximum degree~$\Delta$
admits an upward planar drawing where every edge has at most one bend
and the edges segments use only~$\Delta$ distinct slopes. 
Di Giacomo \etal~\cite{DLM20} provided complementary results
by proving that also every series-parallel digraph admits a 1-bend upward planar drawing
on~$\Delta$ distinct slopes; their drawings also have optimal angular resolution.
Brückner \etal~\cite{BKM19} considered level-planar drawings,
that is, upward planar drawings 
where each vertex is drawn on a predefined integer y-coordinate (its level),
with a fixed slope set.
In this paper, we continue this recent trend. 
In particular, we study bendless upward planar drawings that use only two different slopes. 
An example of such a drawing is shown in \cref{fig:introExample}~(c).
Some of our results can be extended to 1-bend upward planar drawings.
We now define these drawing concepts more precisely and list related work.

\paragraph{Upward planarity.}
An \emph{upward planar drawing} of a directed graph~$G$ is a planar drawing of $G$
where every edge $(u, v)$ (i.e., an edge directed from $u$ to $v$) 
is drawn as a monotonic upward curve from $u$ to $v$.
A digraph is called \emph{upward planar} if it admits an upward planar drawing. 
Two upward planar drawings of the same digraph are \emph{topologically equivalent} if the left-to-right orderings 
of the incoming and outgoing edges around each vertex coincide in the two drawings. 
An \emph{upward planar embedding} is an equivalence class of upward planar drawings.  
An upward planar digraph is called \emph{upward plane} if it is equipped with an upward planar embedding. 

A necessary though not sufficient condition for upward planarity is acyclicity~\cite{BDLM94}.
Moreover, Garg and Tamassia~\cite{GT01} showed that testing upward planarity is NP-complete for general digraphs.
While the digraphs used in their reduction contain vertices with in- and outdegree higher than two,
such vertices can be split into mulitple vertices of maximum in- and outdegree at most two
without losing any of the properties required in their proofs. 
It is thus also NP-complete to test upward planarity for digraphs 
with in- and outdegree at most two.
On the positive side, there exist several fixed-parameter tractable algorithms for general digraphs~\cite{Cha04,HL06,DGL10} 
and polynomial time algorithms for single source digraphs~\cite{BDMT98}, series-parallel digraphs~\cite{DGL10}, 
outerplanar digraphs~\cite{Pap95}, and triconnected digraphs~\cite{BDLM94}. 
Moreover, upward planarity can be decided in polynomial time if the embedding is specified~\cite{BDLM94}.

\paragraph{$\ell$-bend $k$-slope drawings.}
In an \emph{$\ell$-bend drawing} of a graph $G$ each edge is drawn with at most $\ell+1$ line segments; 
equivalently, each edge has at most $\ell$ bends.
An \emph{$\ell$-bend $k$-slope drawing} of $G$ is an $\ell$-bend drawing of $G$ 
where every edge segment has one of at most $k$ distinct slopes.
From now on and if not further specified, 
we refer to bendless (or 0-bend) $k$-slope drawings simply as \emph{$k$-slope drawings}.

Note that orthogonal drawings are $2$-slope drawings without a bound on the number of bends.
Tamassia~\cite{Tam87} showed that a planar orthogonal drawing with minimum total number of bends 
of a plane graph on $n$ vertices can be computed in~$\Oh(n^2 \log n)$ time. 
Rhaman \etal~\cite{RNN03} gave necessary and sufficient conditions for a subcubic plane graph
to admit a bendless orthogonal drawing. 
For drawings of cubic graphs in 3D, 
Eppstein~\cite{Epp13} considered bendless orthogonal graph drawings where
two vertices are adjacent if and only if two of their coordinates are equal. 

Given a graph $G$, the minimum number $k$ of slopes needed for $G$ 
to admit a $k$-slope drawing is called the \emph{slope number} of $G$~\cite{WC94}.
In the planar setting, this is the \emph{planar slope number} of $G$.
Both these numbers have been studied extensively.
For example, Pach and Pálvölgyi~\cite{PP06} showed that the slope number of graphs with maximum degree~5
can be arbitrarily large. Further results, include bounds on slope numbers of graph classes
such as trees, 2-trees, planar 3-trees, outerplanar graphs~\cite{DESW07,LLMN13,JJKLTV13,KMW14}, 
cubic graphs~\cite{MP12}, and subcubic graphs~\cite{DLM18,KMSS18}.
Determining the planar slope number is hard in the existential theory of the reals~\cite{Hof17}.

\paragraph{Upward planar 2-slope drawings.}
The focus of this paper lies on (bendless) upward planar $2$-slope drawings. 
We consider only the slope set $\set{-\pi/4,\pi/4}$ and denote it by $\set{\sL, \sR}$,
since an upward planar 2-slope drawing on any two slopes can be morphed into an upward planar 2-slope drawing 
with the slopes $\sL$ and $\sR$ -- imagine this as (un)skewing a partial grid; see \cref{fig:skewing}.
Note that a natural lower bound on the upward planar slope number of a graph is given by its maximum in- and outdegree.
Hence, we assume that the graphs considered in this paper have maximum in- and outdegree at most two.

\begin{figure}[htb]
  \centering
  \includegraphics{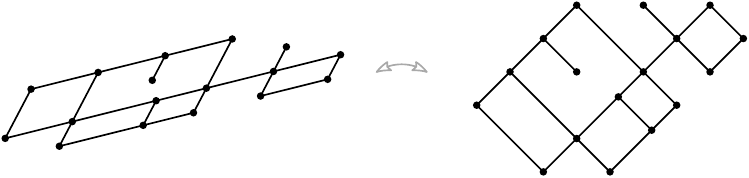} 
  \caption{Two upward planar 2-slope drawings of the same graph on different slope sets -- edge directions are given implicitly. 
  Using an affine transformation one can transform a drawing on any size-two slope set into one on $\set{\sL, \sR}$.}
  \label{fig:skewing}
\end{figure}

Bachmaier \etal~\cite{BBBHMU09}, Brunner and Matzeder~\cite{BM11}, and Bachmaier and Matzeder~\cite{BM13}
studied straight-line drawings of ordered and unordered rooted trees on orthogonal grids with $k$ directions for $k \in \set{4, 6, 8}$. 
Some of their drawing styles are also upward planar.
A classical result of Crescenzi \etal~\cite{CDP92} shows
that any binary tree with $n$ vertices admits an upward planar 2-slope drawing in $\Oh(n \log n)$ area.  
Concerning more complex graphs, 
upward planar drawings with few slopes for lattices have been studied by Czyzowicz \etal~\cite{CPR90} and Czyzowicz~\cite{Czy91}.
As mentioned above, Bekos \etal~\cite{BDDLM18} and Di Giacomo \etal~\cite{DLM20} 
considered such drawings for $st$-graph and series-parallel graphs
but also allowed bends.  
In a companion paper to the current one, 
Klawitter and Zink~\cite{KZ21} study upward planar $k$-slope drawings for $k \geq 3$
and among other results show that it is NP-hard to decide whether an outerplanar digraph
admits an upward planar 3-slope drawing.

\paragraph{Phylogenetic networks.}
Our interest in upward planar 2-slope drawings 
also stems from the problem of visualizing phylogenetic networks.
\emph{Phylogenetic trees} and \emph{networks} are used to model 
the evolutionary history of a set of taxa like species, genes, or languages~\cite{HRS10,Dun14,Ste16}.
The precise definition of phylogenetic networks and their drawing conventions may vary widely depending on the particular use case.
For instance, vertices may have timestamps that should be represented in the drawings or
leaves may be required to be placed on the same height. 
In combinatorial phylogenetics the following definition is commonly used~\cite{HRS10}:
A \emph{phylogenetic network} is a rooted digraph where the leaves are labelled bijectively with a set of taxa. 
Inner vertices are either \emph{tree vertices} that have indegree one and outdegree two 
or \emph{reticulations} that have indegree two and outdegree one; see \cref{fig:phynets}~(a).
A network without reticulations is a \emph{phylogenetic tree}; see \cref{fig:phynets}~(b,c).

\begin{figure}[htb]
  \centering
  \includegraphics[page=1]{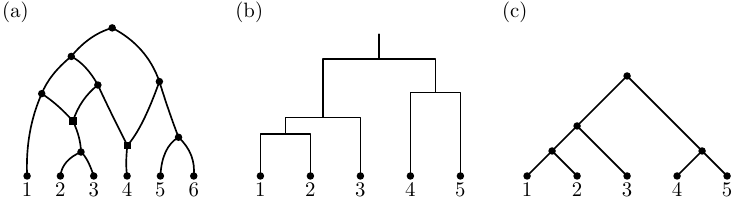}
  \caption{(a) A phylogenetic network with two reticulations; 
  (b) a phylogenetic tree drawn as rectangular cladogram and (c) upward planar with two slopes.}
  \label{fig:phynets}
\end{figure}

There exist different drawing styles for phylogenetic trees such as rectangular or circular cladograms~\cite{BBS05,Hus09}.
If the focus is on the topology of the tree (and thus the taxonomy), 
a common drawing style is upward planar with 2-slope and all leaves aligned on a line.
Theoretical work to adapt classical drawing styles from phylogenetic trees to phylogenetic networks 
has been carried out by Huson \etal~\cite{KH08,Hus09,HRS10}.
A different approach has been taken by Tollis and Kakoulis~\cite{TK16}, 
who propose a drawing style similar to treemaps for a special class of phylogenetic networks.
There also exist several software tools to draw phylogenetic networks~\cite{HB05,HS12,BDM12,Vau17,SVBDAS21}.
Here we are interested in drawing upward planar phylogenetic networks with two slopes and the additional constraint
that all leaves lie on a horizontal line; see \cref{fig:phynets}~(c).

\paragraph{Contribution.}
In this paper we investigate the following decision problem. 
Given a digraph $G$ of in- and outdegree at most two, decide whether it admits an upward planar $2$-slope drawing.
We distinguish the fixed and variable embedding scenario, that is, whether $G$ is already equipped with an upward planar embedding or not.
In the former case, 
we give a simple characterisation of when a drawing exists; see \cref{sec:plane}.
By making use of orthogonal drawing algorithms, we also show how to construct a drawing (if it exists) in linear time.
In addition, if no upward planar 2-slope drawing exists, we describe how to obtain an upward planar 1-bend 2-slope drawing of $G$ 
with minimum number of bends.

For the variable embedding scenario, we check whether graphs of different graph classes admit
an upward planar 2-slope drawing, based on the results of \cref{sec:plane}.
In \cref{sec:ss}, we show that for a single-source digraph $G$, checking whether an upward planar 2-slope drawing of $G$ exists 
can be done starting from any single upward planar embedding of $G$.
In the affirmative, a suitable upward planar embedding can be derived and a drawing constructed in linear time.
For series-parallel digraphs (\cref{sec:sp}) and general digraphs (\cref{sec:digraph,sec:nonbi}),
we derive a quartic-time and a fixed-parameter tractable algorithm, respectively.
These algorithms are based on Didimo et al.'s algorithms for upward planarity testing~\cite{DGL10}.

Lastly, we show how to compute 2-slope drawings of upward planar phylogenetic networks,
where all leaves lie on a horizontal line in linear time; see \cref{sec:phynet}.
We conclude with a short discussion and open problems.

\section{Preliminaries} \label{sec:prelim} 
Let $G$ be an upward plane digraph with maximum in- and outdegree two.
We assume, without loss of generality, that $G$ is connected and 
let $n$ denote the number of vertices of $G$ or the graph currently under consideration.
We use $(u, v)$ to denote an edge of $G$ that is directed from $u$ to $v$.
For two vertices $u, v \in V(G)$, we say that $u$ precedes $v$ if there is a directed path from $u$ to $v$.

If a vertex $v$ of $G$ has two incoming edges, 
then based on the left-to-right ordering of the edges around $v$, 
it is natural to talk about the \emph{left} and the \emph{right} incoming edge of $v$.
If an edge $e$ is the only incoming edge at $v$, we call $e$ the \emph{sole} incoming edge of $v$.
The same holds for outgoing edges; see \cref{fig:angles}~(a).
We say that a vertex $v$ has face $f$ to the \emph{left (right)}
if $f$ is the face left (resp. right) of $v$'s leftmost (resp. rightmost) incoming 
and leftmost (resp. rightmost) outgoing edge (if they exist).

A \emph{2-slope assignment} $\phi$ is a mapping from the edges of $G$ to the slopes $\sR$ and $\sL$,
that is, $\phi : E(G) \to \set{\sL, \sR}$.
We say $\phi$ is a \emph{consistent 2-slope assignment} if  
\begin{itemize}
  \item every left (right) incoming edge of a vertex is assigned the slope $\sR$ (resp. $\sL$), and
  \item every left (right) outgoing edge of a vertex is assigned the slope $\sL$ (resp. $\sR$). 
\end{itemize}
An edge $(u, v)$ that is the sole outgoing edge of $u$ and the sole incoming edge of $v$ may have either slope.
A digraph $G$ together with a consistent 2-slope assignment $\phi$ forms an \emph{upward planar 2-slope representation} $U_G = (G, \phi)$. 
To avoid cumbersome notation, we simply write 2-slope representation. 

Suppose $G$ contains an edge $e = (u, v)$ that is the left outgoing edge of $u$ and left incoming edge of $v$.
Let $f$ be the face to the right of $e$; see \cref{fig:angles}~(b).
We then call~$e$ a \emph{bad edge} with respect to $f$ since $e$ obstructs a consistent $2$-slope assignment.
Note, however, that $e$ is not a bad edge with respect to the face $f'$ left of $e$; see again \cref{fig:angles}~(b).
The same holds with ``left'' and ``right'' in reversed roles. 

\begin{figure}[htb]
  \centering
  \includegraphics{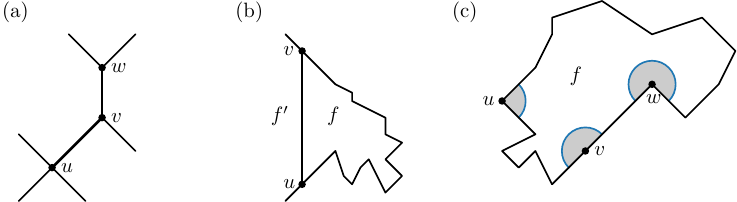}
  \caption{(a) The edge $(u, v)$ is the right outgoing edge of $u$ and left incoming edge of $v$, the edge $(v, w)$ is the sole outgoing edge of $v$;
  (b) the edge $(u, v)$ is a bad edge with respect to $f$;
  (c) $u$ is a left-switch spanning a small angle, $v$ spans a flat angle and is thus not a switch, and $w$ is a sink-switch spanning a large angle. }
  \label{fig:angles}
\end{figure}

Let $U_G$ be a $2$-slope representation of $G$.
Let $a = (e_1, v, e_2)$ be a triplet such that~$v$ is a vertex of the boundary
of a face $f$ and $e_1$, $e_2$ are incident edges of $v$ that are consecutive on the boundary of $f$ 
in counterclockwise direction. 
The following definitions follow Bertolazzi \etal~\cite{BDMT98} and Didimo \etal~\cite{DGL10} 
though are adjusted to also encapsulate geometric properties induced by~$U_G$.  
The triplet~$a$ is called an \emph{angle} of~$f$. We can categorise angles into three groups, 
namely, $a$ is
\begin{itemize}
  \item a \emph{large angle} if $e_1$ and $e_2$ span a 270$^{\circ}$ angle in $f$,
  \item a \emph{small angle} if $e_1$ and $e_2$ span a 90$^{\circ}$ angle in $f$, or
  \item a \emph{flat angle} if $e_1$ and $e_2$ span a 180$^{\circ}$ angle in $f$
\end{itemize}
with respect to the slopes assigned to~$e_1$ and~$e_2$.
A vertex~$v$ of~$G$ is called a \emph{local source} with respect to a face~$f$ 
if~$v$ has two outgoing edges on the boundary of~$f$. A \emph{local sink} is defined analogously.
Furthermore, we call~$v$ a \emph{switch} with respect to~$f$ if the slopes of~$e_1$ and~$e_2$ differ;
for example, every local source is a switch.
We further categorise switches by the angle they span and where they lie on the boundary of~$f$; see \cref{fig:angles}~(c).
A switch~$v$ is a \emph{large switch} if~$e_1$ and~$e_2$ span a large angle at~$f$ and a \emph{small switch} otherwise;
note that there can be no ``flat'' switches. 
We call~$v$ a \emph{source-switch} or \emph{sink-switch} if~$v$ is a local sink or local source, respectively.
Otherwise, if~$e_1$ and~$e_2$ have~$f$ to the right~(left), 
then~$v$ is a \emph{left-switch} (resp. \emph{right-switch}). 
Note that an inner face~$f$ of~$G$ contains exactly four small switches more than large switches and
that the outer face contains four large switches more than small switches.
An inner (the outer) face~$f$ is \emph{rectangular} if it contains exactly four small (resp. large) switches. 

Assume for now that~$G$ is biconnected. The following definitions are illustrated in \cref{fig:splitComponentFlip}.
A \emph{split pair}~$\set{u, v}$ of~$G$ is either a separation pair or a pair of adjacent vertices.
A \emph{split component} of~$G$ with respect to the split pair~$\set{u, v}$ is either an edge~$(u, v)$ (or~$(v, u)$)
or a maximal subgraph~$G'$ of~$G$ such that~$G'$ contains~$u$ and~$v$ and~$\set{u, v}$ is not a split pair of~$G'$.
Let~$G'$ be such a split component with respect to the split pair~$\set{u, v}$.
If~$G'$ is equipped with an upward planar embedding, then we define the \emph{flip} of~$G'$
as the change of the embedding of~$G'$ by reversing the edge ordering of every vertex of~$G'$.

\begin{figure}[htb]
  \centering
	\includegraphics{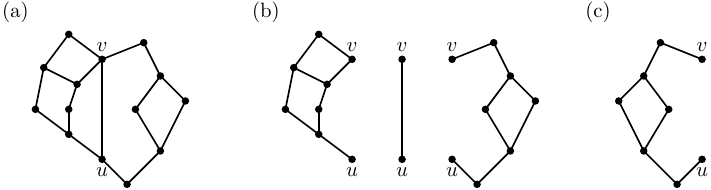}
  \caption{(a) A biconnected digraph with split pair $\set{u, v}$, 
  (b) which induces three split components; 
  (c) flip of the third split component.}
  \label{fig:splitComponentFlip}
\end{figure}

\section{Fixed embedding} \label{sec:plane} 
In this section we consider the problem of whether an upward plane digraph admits an upward planar 2-slope drawing 
under the given fixed embedding.
As noted above, a bad edge obstructs the existence of a consistent 2-slope assignment and thus of a 2-slope representation for $G$.
We show that the absence of any bad edges is not only necessary but also sufficient. 

Since upward planar 2-slope drawings are related to orthogonal drawings, we can make use of techniques used to construct them.
The classical algorithm by Tamassia~\cite{Tam87}, which constructs an orthogonal drawing of a plane graph with minimum number of bends, 
works in three steps; 
refer also to Di Battista \etal~\cite[Chapter 5]{DETT99}. 
It starts with a plane graph and constructs a so-called \emph{orthogonal representation}, a description of the shapes of the faces. 
The second step, called \emph{refinement}, subdivides each face into rectangles. 
Finally, the third step performs a so-called \emph{compaction} -- it assigns coordinates to the vertices with the goal to minimize the area of the drawing. 
The technique for constructing an orthogonal representation cannot be directly applied for the construction of an upward planar 2-slope drawing, 
as it does not preserve the upwardness of edges. 
However, assuming a 2-slope representation is already given, 
we can adopt the refinement algorithm by Tamassia~\cite{Tam87} and the compaction algorithm by Di Battista~\etal~\cite{DETT99} for our purposes.
In the following lemma we describe a modified version of Tamassia's algorithm that refines the faces of a 2-slope representation;
we explain how to obtain a 2-slope representation in \cref{clm:plane}.
For this, recall that a switch is a triplet $(e_1, v, e_2)$ consisting of a vertex and its two incident edges along a face in counterclockwise order
where $e_1$ and $e_2$ have different slopes.

\begin{lemma} \label[lemma]{clm:plane:refinement}
Let $G$ be an upward plane digraph on $n$ vertices with a 2-slope representation $U_G = (G, \phi)$.
Then, in $\Oh(n)$ time, $G$ can be refined into a digraph $\bar{G}$ that contains only rectangular faces
and such that $G$ is a topological minor of $\bar{G}$ respecting $\phi$. 
\end{lemma}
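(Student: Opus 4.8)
The plan is to mimic Tamassia's refinement step but adapted to the upward 2-slope setting. Recall from the preliminaries that an inner face $f$ of a 2-slope representation has exactly four more small switches than large switches, and $f$ is rectangular precisely when it has four small switches and no large ones (the outer face is handled symmetrically). The goal of the refinement is to eliminate every large switch by inserting new vertices and edges into each non-rectangular face so that the face is subdivided into rectangular faces, while neither introducing bad edges nor destroying upwardness, and such that the original graph $G$ reappears as a topological minor respecting $\phi$. The natural approach is face by face: for each large switch in a non-rectangular face, we shoot an axis-parallel ``saturator'' edge into the interior of the face in the appropriate slope direction until it hits an existing edge or vertex, where we subdivide and place a new vertex; this is exactly Tamassia's saturation of concave (reflex) corners, transcribed from the orthogonal $90^\circ/270^\circ$ language into the 2-slope $\set{\sL, \sR}$ language.

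The key steps I would carry out in order are as follows. First I would fix a non-rectangular inner face $f$ and pick a large switch $v = (e_1, v, e_2)$ on its boundary; the two incident edges span a $270^\circ$ angle, which in the $\set{\sL, \sR}$ slope set means there is a reflex corner pointing into $f$. Second, I would insert a new edge from $v$ (or rather a new dummy vertex $v'$ placed at the reflex corner) heading into the interior of $f$, assigning it the one of the two slopes that splits the large angle into a small angle plus a flat angle, then continue the ruling until it meets the boundary of $f$, where I subdivide the hit edge with a new degree-$2$ dummy vertex and connect. Third, I would verify that this insertion reduces the count of large switches in $f$ by one while keeping all new angles small or flat, and in particular introduces no bad edge, so that the resulting representation remains consistent. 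Iterating this over all large switches of all non-rectangular faces yields $\bar G$ with only rectangular faces. For the running time, I would observe that by the switch-counting identity each face initially has a number of large switches bounded by its size, so the total number of inserted vertices and edges is $\Oh(n)$, and with an appropriate sweep or next-corner data structure each insertion can be found in amortised constant time, giving the claimed $\Oh(n)$ bound.

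For the topological-minor claim I would argue that every vertex and edge of $G$ survives in $\bar G$: original edges are only ever subdivided (never deleted or merged), the dummy vertices have degree two along the subdivided original edges or serve as endpoints of the newly inserted saturator edges, and the slope of each original edge is unchanged. Contracting all the newly added edges and suppressing the degree-two dummy vertices recovers $G$ with its embedding and its slope assignment $\phi$ intact, which is precisely what ``topological minor respecting $\phi$'' requires.

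The main obstacle I expect is the careful case analysis ensuring that the saturator edge can always be assigned a slope in $\set{\sL, \sR}$ that both (i) genuinely splits a $270^\circ$ large angle into legal sub-angles and (ii) does not create a bad edge where it lands on the opposite boundary. Because we are constrained to exactly two slopes and the upward direction, a reflex corner at a source-switch versus a sink-switch versus a left/right-switch behaves differently, and one must check that the unique available slope is always consistent with the switch type at the landing point as well. Handling the outer face (four excess large switches rather than four excess small switches) is a symmetric but separate subcase that also needs to be spelled out. Making this exhaustive yet concise is where the real work of the proof lies; the rest is bookkeeping that parallels the classical orthogonal-drawing argument.
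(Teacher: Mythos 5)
Your overall strategy is the right one and matches the paper's: adapt Tamassia's refinement, saturate large switches one by one, cut off rectangular pieces, and recover $G$ as a topological minor because original edges are only ever subdivided. However, there is a genuine gap at the heart of your second step. You propose to ``shoot a saturator edge into the interior of the face \ldots until it hits an existing edge or vertex.'' At this stage you only have a 2-slope \emph{representation} $(G,\phi)$ -- a combinatorial object with no coordinates; producing a drawing is exactly what the refinement is a prerequisite for. So ``where the ray hits the boundary'' is undefined, and you never supply the combinatorial rule that replaces it. This is not mere bookkeeping: it is the one nontrivial idea of the proof. The paper's rule is to traverse the circular sequence of switches of the face and locate a large switch $u$ that is \emph{immediately followed} by a small switch $v$ (such a pair exists in any non-rectangular face); the target of the saturator is then determined purely combinatorially by the type of the switch $w$ following $v$ -- if $w$ is large, a new vertex $x$ is joined to both $u$ and $w$, and if $w$ is small, the edge leaving $w$ is subdivided and $x$ is joined to $u$. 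This specific pairing is what guarantees that the piece cut off is a rectangle and that the remaining face loses exactly one small and one large switch, preserving the invariant that small switches exceed large ones by four. Your accounting (``reduces the count of large switches by one'') would by itself break that invariant, so the construction must also consume a small switch, which your plan does not make explicit.

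A secondary, smaller weakness is the running-time argument: ``an appropriate sweep or next-corner data structure'' in amortised constant time is asserted rather than exhibited, whereas the linear bound actually follows from maintaining predecessor/successor pointers on the switch sequence and resuming the traversal at the switch preceding the newly created vertex, so that each face boundary is walked only a constant number of times. Your topological-minor argument and the $\Oh(n)$ bound on the number of inserted vertices (one per large switch) are fine as stated.
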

\begin{proof}
  If every face of $G$ is already rectangular, then we are done and $\bar{G} = G$.
  Assume that this is not the case and let $f$ be a non-rectangular inner face of $G$. 
  We describe how to refine $f$ into rectangular faces. 
  
  First, traverse the boundary of $f$ counterclockwise
  and store in each switch pointers to its preceding and subsequent switch.
  Next, starting at any switch, traverse the circular sequence of switches in counterclockwise order.
  Let $u$ be the first encountered large switch that is preceding a small switch $v$.
  Note that such $u$ exists since $f$ is not rectangular but contains at least four small switches.
  Without loss of generality, assume that $u$ is a sink-switch. (The cases when $u$ is a source-, left-, or right-switch work analogously.) 
  Let $w$ be the subsequent switch of $v$.  
  If $w$ is a large switch, then add a vertex~$x$ and the edges $(u, x)$ and $(w, x)$ with slopes $\sR$ and $\sL$, respectively; 
  see \cref{fig:augmentationStep}~(a).
  Otherwise, if $w$ is a small switch (and thus a right-switch), then subdivide the outgoing edge of $w$ with a new vertex $x$,
  and add the edge $(u, x)$ with slope $\sR$. Assign the slope~$\sL$ to the two edges resulting from the subdivision.
  This ensures that $\phi$ is respected by~$G$ as topological minor of $\bar{G}$; see \cref{fig:augmentationStep}~(b).
  In either of the two cases, the result is a rectangular face~$f_1$ 
  and a face $f_2$ with one less small and one less large switch than $f$. 
  Let $f_2$ now take the role of $f$ and store the preceding switch of $u$ and the subsequent switch of $w$ 
  as preceding and subsequent switches of $x$, respectively.
  If $x$ is a small switch, continue the traversal with the switch preceding~$x$ instead of with $x$.
  Therefore, if a large switch precedes $u$ in $f$, the process directly continues with a refinement step
  without having to potentially traverse the whole circular sequence first.
  Stop when only four switches are left and when $f$ is thus rectangular.
  Note that this process runs in linear time in terms of the size of $f$ and that it only adds as
  many new vertices as the number of large switches that $f$ contains.
      
  \begin{figure}[htb]
    \centering
    \includegraphics{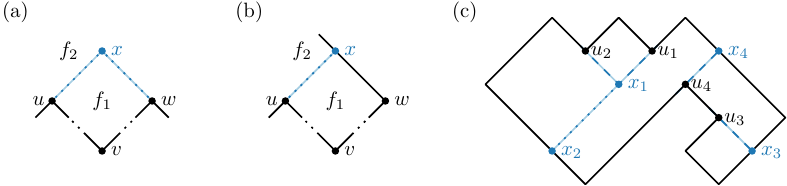}
    \caption{How to refine non-rectangular faces of $G$ to obtain $\bar{G}$ 
    when a large switch is followed (a) by a small and a large switch or (b) by two small switches.
    On the right, the large switch $u_4$ precedes the large switch $u_3$ and is thus processed after $u_3$.} 
    \label{fig:augmentationStep} 
  \end{figure}
  
  Repeat this procedure for every non-rectangular inner face of~$G$.
  If the outer face $f_0$ of $G$ is non-rectangular, 
  apply the analogous procedure with the difference that the goal is to remove all small switches
  such that $f_0$ only contains four large switches.
  Further note that here a large switch~$u$ preceding a small switch $v$ exists
  since $f_0$ being non-rectangular implies that there is at least one small switch.
  
  Let $\bar{G}$ be the resulting digraph where all faces are rectangular. 
  By construction, $\bar{G}$ has $G$ as topological minor and size in $\Oh(n)$.  
  Furthermore, since the boundary of every face of $G$ was traversed only twice,
  this refinement algorithm runs in $\Oh(n)$ time.
\end{proof}

We can now prove the main theorem of this section.

\begin{theorem} \label{clm:plane}
Let $G$ be an upward plane digraph with $n$ vertices. Then the following statements are equivalent.
\begin{enumerate}[leftmargin=*,label=(F\arabic*)]
  \item \label{case:f1} $G$ admits an upward planar $2$-slope drawing.
  \item \label{case:f2} $G$ admits a $2$-slope representation $U_G$.
  \item \label{case:f3} $G$ contains no bad edge.
\end{enumerate} 
Moreover, there exists an $\Oh(n)$-time algorithm that tests if $G$ satisfies \ref{case:f3}
and constructs an upward planar 2-slope drawing of $G$ in the affirmative case.
\end{theorem}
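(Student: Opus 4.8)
The plan is to prove the cycle of implications
\ref{case:f1} $\Rightarrow$ \ref{case:f2} $\Rightarrow$ \ref{case:f3} $\Rightarrow$ \ref{case:f1},
where the last implication carries the constructive content and is where the real work lies. The two easy directions come first. For \ref{case:f1} $\Rightarrow$ \ref{case:f2}, given an upward planar 2-slope drawing I would simply read off, for each edge, which of the two slopes $\sL$ or $\sR$ it uses; the geometry of the drawing forces this assignment to respect the left/right orderings at every vertex, so it is a consistent 2-slope assignment and hence a 2-slope representation $U_G$. For \ref{case:f2} $\Rightarrow$ \ref{case:f3} I would argue by contradiction: if $G$ had a bad edge $e=(u,v)$, then (say) $e$ is the left outgoing edge of $u$, which forces $\phi(e)=\sL$, while $e$ is also the left incoming edge of $v$, which forces $\phi(e)=\sR$ — contradicting consistency. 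This is essentially the observation already made in the text that a bad edge obstructs any consistent assignment, so I would just make it precise.

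The substantial direction is \ref{case:f3} $\Rightarrow$ \ref{case:f1}, which must also yield the claimed linear-time construction. First I would observe that the absence of bad edges lets me build a consistent 2-slope assignment $\phi$ directly: at each vertex the left/right labels of incoming and outgoing edges dictate the slope of every non-sole edge, and the no-bad-edge condition guarantees these local dictates never conflict on a shared edge; sole-to-sole edges can be assigned arbitrarily (they are free by definition). This produces a 2-slope representation $U_G$, establishing \ref{case:f3} $\Rightarrow$ \ref{case:f2} as well and closing the logical cycle. The remaining and most delicate task is to turn the combinatorial representation $U_G$ into an actual planar straight-line drawing realizing exactly these slopes.

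For the realization I would lean on \cref{clm:plane:refinement}: refine $U_G$ in $\Oh(n)$ time into $\bar G$ whose every face is rectangular, with $G$ a topological minor of $\bar G$ respecting $\phi$. Once all faces are rectangular, the structure is essentially that of an orthogonal representation (after the $45^\circ$ skew sending $\sL,\sR$ to the horizontal/vertical axes), so I would invoke the compaction step of the orthogonal-drawing pipeline — the algorithm of Di Battista \etal\ cited in the text — to assign integer coordinates to the vertices of $\bar G$ in linear time, producing a valid drawing in which each edge segment follows its assigned slope. Suppressing the subdivision vertices and the dummy vertices and edges added during refinement then yields an upward planar 2-slope drawing of the original $G$; upwardness is preserved because both $\sL$ and $\sR$ point strictly upward, and planarity and the slope constraints are inherited from $\bar G$.

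The main obstacle I anticipate is the realization step rather than the logical equivalences. Two points need care. The first is verifying that a rectangular 2-slope representation really is compatible with the compaction algorithm designed for orthogonal drawings: I must check that the refinement of \cref{clm:plane:refinement} produces, under the $45^\circ$ skew, exactly a valid orthogonal representation (each rectangular face has its four small switches playing the role of the four convex corners), so that the cited compaction applies as a black box and runs in $\Oh(n)$. The second is bookkeeping the suppression of dummy elements: I must confirm that removing a subdivision vertex merges two collinear same-slope segments into a single straight segment — which holds precisely because refinement assigned the same slope to both halves of any subdivided edge — and that deleting the added dummy vertices and edges leaves the embedding of $G$ and the slopes of its surviving edges untouched. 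Assembling the three linear-time phases (assignment, refinement, compaction) then gives the claimed overall $\Oh(n)$ bound.
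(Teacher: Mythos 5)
Your proposal is correct and follows essentially the same route as the paper: the easy implications \ref{case:f1}$\Rightarrow$\ref{case:f2}$\Rightarrow$\ref{case:f3}, a direct greedy construction of the consistent slope assignment from the absence of bad edges, and then refinement via \cref{clm:plane:refinement} followed by the linear-time orthogonal compaction of Di Battista~\etal\ to realize the drawing. The extra care you flag about the $45^\circ$ skew and about suppressing subdivision vertices merging same-slope segments is exactly the justification the paper relies on, so nothing is missing.
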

\begin{proof}
  Note that \ref{case:f1} implies \ref{case:f2} and \ref{case:f2} implies \ref{case:f3}. 
  We first show that \ref{case:f3} implies \ref{case:f2} and then how to construct a drawing \ref{case:f1} from \ref{case:f2}. 
 
  Whether $G$ contains a bad edge can easily be checked in $\Oh(n)$ time.
  Suppose it does not and thus satisfies \ref{case:f3}.
  Construct a consistent 2-slope assignment for $G$ as follows.
  Go through all edges of $G$ (in any order). 
  For an edge $e$, if it is a left (right) incoming edge, assign to it slope~$\sR$ (resp.~$\sL$).
  Otherwise, if it is a left (right) outgoing edge, assign to it slope~$\sL$ (resp.~$\sR$).
  We claim that since $e$ is not a bad edge, there is no conflict.
  Assume otherwise, namely, that $e = (u, v)$ gets, say, slope $\sL$ from $u$ and slope~$\sR$ from~$v$. 
  However, then $u$ must be the left outgoing edge at $u$ and the left incoming edge at $v$,
  which makes $e$ a bad edge thus contradiction our assumption.  
  If~$e$ is both a sole incoming and a sole outgoing edge, assign it an arbitrary slope, say $\sL$. 
  Together with the already given upward planar embedding of $G$, 
  this slope assignment yields a 2-slope representation~$U_G$ of $G$.
  
  Next, we construct an upward planar 2-slope drawing of $G$.
  Use \cref{clm:plane:refinement} to obtain a $2$-slope representation $U_{\bar G}$ of an upward planar digraph $\bar G$ in which every face is rectangular
  in $\Oh(n)$ time.
  Note that rotating $\bar G$ clockwise by 45$^\circ$,
  makes the slope $\sL$ vertical and the slope $\sR$ horizontal and we get an orthogonal representation of a graph where every face is rectangular.
  Therefore we can apply the linear-time compaction algorithm by Di Battista \etal~\cite[Theorem~5.3]{DETT99}.
  This algorithm assigns edge lengths and computes coordinates 
  while handling the vertical and orthogonal direction of a orthogonal representation independently.
  Hence, applying the algorithm to $U_{\bar G}$, the edges with slopes $\sR$ and $\sL$ are handled independently and keep their slopes.
  (Note that a, say, vertical edge $(u, v)$ with length $c$ in the orthogonal drawing 
  corresponds to $v$ being $c$ units above and to the left of $u$ in an upward planar 2-slope drawing.) 
  As a result, we get an upward planar 2-slope drawing of $\bar G$, which we can reduce to a drawing of $G$.
  Since the three steps run in $\Oh(n)$ time each, the claim on the running time follows.
\end{proof}

Suppose we have an upward plane digraph $G$ with $k$ bad edges. 
Since $G$ admits no upward planar 2-slope drawing, it is natural to ask whether $G$ admits an upward planar 1-bend 2-slope drawing.
In particular, if such a drawing exist, is it enough to bend only the $k$ bad edges? 
Using \cref{clm:plane} we can answer this question affirmatively.

\begin{corollary} \label[corollary]{clm:plane:bends}
Let $G$ be an upward plane digraph with $n$ vertices, maximum in- and outdegree at most two, and with $k$ bad edges.
Then $G$ admits an upward planar 1-bend 2-slope drawing with $k$ bends.
Moreover, without changing the embedding, this is the minimum number of bends that can be achieved.
\end{corollary}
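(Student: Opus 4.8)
The plan is to prove the two claims of the corollary separately: first that $k$ bends suffice, then that $k$ is optimal. For the upper bound, I would take the given upward plane digraph $G$ with its $k$ bad edges and bend exactly those edges to eliminate the obstruction they create. Recall that a bad edge $e=(u,v)$ is, say, the left outgoing edge of $u$ and the left incoming edge of $v$, with some face $f$ to its right; the conflict is that a consistent 2-slope assignment would force $e$ to carry two different slopes at its two endpoints. By introducing a single bend point $b$ on $e$, I can subdivide $e$ into two edges $(u,b)$ and $(b,v)$, assigning $(u,b)$ the slope $\sL$ (as demanded by $u$'s left outgoing edge) and $(b,v)$ the slope $\sR$ (as demanded by $v$'s left incoming edge). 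Treating each bend vertex as a genuine degree-two vertex, the resulting subdivided digraph $G'$ has maximum in- and outdegree at most two, inherits an upward planar embedding from $G$, and — crucially — contains no bad edge, since every former bad edge has been split so that each half is now consistent. I would verify that bending the bad edge does not create new bad edges elsewhere: the subdivision only affects the two half-edges of $e$, and the bend vertex $b$ is a flat-angle vertex that poses no obstruction.

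Having established that $G'$ is bad-edge-free, I would apply \cref{clm:plane} to $G'$: since $G'$ satisfies condition \ref{case:f3}, it admits an upward planar 2-slope drawing, which can be constructed in $\Oh(n)$ time. Reinterpreting each bend vertex $b$ as a bend of the original edge $e$ rather than as a vertex, this drawing becomes an upward planar 1-bend 2-slope drawing of $G$ in which exactly the $k$ bad edges are bent and every other edge remains straight. This gives a drawing with precisely $k$ bends.

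For the lower bound — showing $k$ is the minimum number of bends achievable without changing the embedding — the argument is a local one at each bad edge. I would argue that in any upward planar 1-bend (or in fact any-bend) 2-slope drawing respecting the given embedding, each bad edge must carry at least one bend. Consider a bad edge $e=(u,v)$: the embedding fixes $e$ as the left outgoing edge of $u$ and the left incoming edge of $v$. In a consistent 2-slope drawing the slope of $e$ at $u$ is forced to be $\sL$ and at $v$ is forced to be $\sR$ (these are the only slopes compatible with the neighbouring edges' positions, by the consistency conditions defining a 2-slope representation). Since these two forced slopes differ, a straight segment cannot realise $e$; hence $e$ needs at least one bend to change slope between its endpoints. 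As distinct bad edges are distinct edges, their bends are independent, giving at least $k$ bends in total.

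The main obstacle I anticipate is making the lower-bound argument fully rigorous: I must justify that the embedding genuinely forces the endpoint slopes of a bad edge to be the two conflicting values, so that no straight-line realisation is possible. This requires checking that the consistency conditions inherited from the fixed embedding leave no freedom at the endpoints of a bad edge (whereas a non-bad edge always has a consistent slope available). One should handle the sole-incoming/sole-outgoing exception carefully, since such fully-free edges can take either slope and are never bad; but by definition a bad edge has a genuine left/right sibling on at least one side at each endpoint, which pins down its slope there. Once this forcing is established, both directions combine to show that $k$ bends are necessary and sufficient, completing the proof.
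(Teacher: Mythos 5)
Your proposal is correct and follows essentially the same route as the paper: subdivide each bad edge, observe the subdivided graph has no bad edges, invoke \cref{clm:plane}, and turn subdivision vertices into bends; the lower bound likewise rests on the observation that the fixed embedding forces conflicting slopes at the two endpoints of every bad edge, so each needs its own bend. Your version is merely more explicit about the slope-forcing argument than the paper's terse justification.
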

\begin{proof}
  Subdivide every bad edge once to obtain a graph $G'$.
  Then apply \cref{clm:plane} to obtain an upward planar 2-slope drawing of $G'$.
  Since a bad edge $e$ of $G$ is neither a sole incoming nor a sole outgoing edge, 
  the two edges obtained from $e$ in $G'$ have different slopes.
  Hence, by turning every subdivision vertex into a bend we get an upward planar 1-bend 2-slope drawing of $G$.
  
  Since even a single bad edge obstructs a 2-slope representation 
  and since bending a non-bad edge clearly does not eliminate any other bad edges either,
  it follows that $k$ bends are also necessary. 
\end{proof}

Note that $G$ may admit an upward planar 1-bend 2-slope drawing with less or no bends if the embedding is changed.

\section{Variable embedding} \label{sec:variable} 
In this section we consider the problem of whether a given upward planar digraph $G$ of a particular graph class
admits an upward planar $2$-slope drawing under any upward planar embedding.
We start with two general observations, 
before we consider the class of single-source digraphs, 
where upward planarity can be tested in linear time,
and then continue with the more complex classes of series-parallel digraphs and general digraphs. 

Let $G$ be an upward planar digraph with maximum in- and outdegree at most two. Suppose~$G$ contains a leaf $\ell$.
Note that removing $\ell$ from $G$ does not change whether $G$ admits an upward planar 2-slope drawing.
Moreover, we may reduce $G$ to a digraph without leaves, obtain a drawing of the reduced digraph (if possible), 
and then add the leaves to obtain a $2$-slope drawing of $G$. 
Removing and later restoring leaves takes only linear time.

While leaves are no obstruction, transitive edges are. 
However, note that not all bad edges have to be transitive edges; see \cref{fig:singleSource}. 
   
\begin{observation} \label{clm:planar:transitiveEdge}
A transitive edge of an upward planar digraph $G$ is a bad edge in any upward planar embedding of $G$.
\end{observation}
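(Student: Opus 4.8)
The plan is to fix an arbitrary upward planar embedding of $G$ and to show that the transitive edge is bad in it; since the embedding is arbitrary, this establishes the claim for every upward planar embedding. Let $e = (u, v)$ be the transitive edge. By transitivity there is a directed path $P$ from $u$ to $v$ that does not use $e$, and since $G$ is acyclic this $P$ is simple, has length at least two, and its internal vertices are distinct from $u$ and $v$. In the given drawing both $e$ and $P$ are monotone upward curves that meet only at $u$ and $v$, so their union is a simple closed (Jordan) curve $C$ bounding a region $R$.

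First I would use planarity to pin down the side of $e$ on which $P$ lies. Since $e$ is part of the boundary of the Jordan curve $C$, the region $R$ lies on a well-defined side of $e$; without loss of generality assume $R$ is to the right of $e$ when $e$ is oriented from $u$ to $v$, the other case being symmetric. Let $e_s$ and $e_t$ be the first and last edge of $P$. They are an outgoing edge of $u$ and an incoming edge of $v$, respectively, and both are distinct from $e$. Because $G$ has maximum in- and outdegree at most two, it follows that $e$ and $e_s$ are the only outgoing edges of $u$ and that $e$ and $e_t$ are the only incoming edges of $v$.

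Next I would read off the left/right roles from the rotation at $u$ and at $v$. As the inside region $R$ sits to the right of $e$, the edge $e_s$ leaves $u$ into $R$, i.e.\ to the right of $e$, so $e$ is the \emph{left} outgoing edge of $u$; symmetrically $e_t$ enters $v$ from the right of $e$, so $e$ is the \emph{left} incoming edge of $v$. Moreover the face incident to $e$ on its right side lies inside $R$; denoting it by $f$, the configuration ``$e$ is the left outgoing edge of $u$ and the left incoming edge of $v$, with $f$ to the right'' is precisely the definition of a bad edge with respect to $f$. If instead $R$ lies to the left of $e$, the same argument with ``left'' and ``right'' interchanged shows that $e$ is the right outgoing edge of $u$ and the right incoming edge of $v$ with an incident face to its left, again a bad edge.

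The only genuinely delicate point is the planarity step: justifying that $P$ stays on one side of $e$ via the Jordan curve $C$, and then faithfully translating ``$R$ is on the right of $e$'' into the incidence roles (left versus right, outgoing versus incoming) and the incident face demanded by the bad-edge definition. I expect this bookkeeping, rather than any deep idea, to be where care is required.
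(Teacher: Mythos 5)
Your proposal is correct and follows essentially the same route as the paper: the paper's proof also argues that since the path $P$ cannot cross $e$, it must leave $u$ and enter $v$ on the same side of $e$, forcing the bad-edge configuration. You merely make the paper's one-line planarity argument explicit via the Jordan curve formed by $e \cup P$ and carefully track the left/right bookkeeping, which is a faithful elaboration rather than a different approach.
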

\begin{proof}
  Let $e = (u, v)$ be a transitive edge of $G$. 
  By definition, there is a directed path~$P$ from $u$ to $v$ different from $e$.
  Since $P$ may not cross $e$, it enters $v$ from the same side (left or right) as it leaves $u$ in any upward planar embedding of $G$
  and hence $e$ is always a bad edge.
\end{proof}

\subsection{Single-source digraphs} \label{sec:ss} 
For a single-source digraph $G$, our idea is to first compute an arbitrary upward planar embedding of $G$ 
with the linear-time algorithm of Bertolazzi \etal~\cite{BDMT98}.
We then check whether there are any bad edges and, if so, whether they can be fixed with small changes to the embedding.
To this end, we need the following lemmata.

\begin{lemma} \label[lemma]{clm:singleSource:outerFace}
An upward planar single-source digraph $G$ contains no bad edge with respect to the outer face.
\end{lemma}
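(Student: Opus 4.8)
The plan is to argue by contradiction using the structure forced by a single source together with the definition of a bad edge. Recall that an edge $e = (u,v)$ is bad with respect to a face $f$ when $e$ is the left outgoing edge of $u$ and the left incoming edge of $v$ while $f$ lies to the right of $e$ (or the mirror-image situation with left/right swapped). The key geometric fact I would exploit is that a bad edge forces the existence of a second directed path running alongside $e$ on the $f$-side: because $e$ is not the sole outgoing edge of $u$, the vertex $u$ has another outgoing edge on the $f$-side, and because $e$ is not the sole incoming edge of $v$, the vertex $v$ has another incoming edge on the $f$-side. So there is a second outgoing edge at $u$ and a second incoming edge at $v$, both bordering $f$.

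First I would set up the contradiction: suppose $e = (u,v)$ is a bad edge with respect to the outer face $f_{\mathrm{out}}$. Walking along the boundary of $f_{\mathrm{out}}$ on the side away from $e$, I would trace out the other edges incident to $u$ and $v$. The crucial observation is that in an upward planar drawing of a single-source digraph, every vertex other than the unique source $s$ has at least one incoming edge, so one can always walk \emph{downward} against edge directions from any vertex and eventually reach $s$. I would use this to build a directed path from $s$ reaching into the region bounded by $e$ and the outer boundary. The single-source property is what pins everything down: there is nowhere else for incoming edges to originate.

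The main step is to show that the second incoming edge of $v$ (the one on the $f_{\mathrm{out}}$-side) must come from inside the bounded region delimited by $e$ and the outer face, yet its tail must trace back to the source $s$, which lies on the outer boundary --- and this forces a crossing or a contradiction with the outer face being unbounded. Concretely, the bounded region cut off by the bad edge and the outer boundary cannot contain the source (since $s$ is on the outer face and everything reachable downward from any interior vertex leads to $s$), so no interior vertex can have all its in-edges staying inside the region; tracing incoming edges downward must cross $e$ or exit through the outer boundary, both impossible. I expect this region-and-reachability argument to be the main obstacle, because making precise ``which side is inside'' and handling the mirror case cleanly requires care with the left/right conventions around $u$ and $v$.

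Finally I would note that the four symmetric cases (bad edge of type left-left versus right-right, and the roles of the two extra edges) all reduce to the same reachability contradiction after relabelling, so it suffices to handle one case in detail and invoke symmetry for the rest. The single-source hypothesis is used exactly once but decisively: it guarantees that downward traversal from any vertex terminates at a vertex on the outer boundary, which is what rules out a bad edge on the outer face.
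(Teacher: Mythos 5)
Your approach is correct in substance but genuinely different from the paper's. The paper argues by angle counting: a bad edge $(u,v)$ with respect to the outer face $f_0$ makes $v$ a sink-switch of $f_0$ spanning a small angle, and the switch balance of the outer face (four more large than small switches) then forces two large-angle source-switches on $f_0$; each such vertex has no room for an incoming edge and is therefore a source of $G$, contradicting the single-source hypothesis. You instead run a reachability-plus-Jordan-curve argument, and this does go through, but the closed curve that produces the contradiction needs to be named more carefully than ``the bounded region delimited by $e$ and the outer boundary'' (which is not well defined, since a priori the outer face is unbounded and $e$ alone cuts off nothing). The clean version is: let $e_v=(w,v)$ be the second incoming edge of $v$, the one bounding the angle of $f_0$ at $v$. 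Walk backwards along incoming edges from $u$ and from $w$; by acyclicity and the single-source hypothesis both walks terminate at $s$, so taking the last common vertex $x$ yields two internally disjoint directed, hence $y$-monotone, paths from $x$ to $v$ ending in $e$ and $e_v$. Their union is a closed curve whose bounded interior lies in the half-plane $y\le y(v)$ and therefore occupies, near $v$, exactly the small sector between $e$ and $e_v$ --- which is precisely where $f_0$ sits. Since $f_0$ is connected and disjoint from the drawing, $f_0$ lies entirely inside this bounded region, contradicting its unboundedness. With that curve pinned down, your proof is complete; it is longer than the paper's but more self-contained, since it does not invoke the large/small switch-count characterisation of upward planar embeddings, and it isolates exactly where the single-source property enters (both tails of $v$'s incoming edges trace back to a common ancestor, which is what closes the curve). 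Note also that the second outgoing edge of $u$, which you introduce, is never actually needed.
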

\begin{proof}
  Consider an upward planar embedding of $G$ 
  and suppose $e = (u, v)$ is a bad edge with respect to the outer face $f_0$; 
  see \cref{fig:singleSource:twoBadEdges}~(a). 
  Let $e'$ be the second incoming edge of $v$.
  Then $v$ is a local sink of $f_0$ such that if $e$ and $e'$ would be drawn with slopes $\sR$ and $\sL$ accordingly,
  then $f_0$ would have a small angle at $v$.
  However, this implies that there are two local sources ($w_1$ and $w_2$ in \cref{fig:singleSource:twoBadEdges}~(a)) of $f_0$ 
  that are also sources of $G$. This is a contradiction to $G$ being a single-source digraph.
\end{proof}

\begin{lemma} \label[lemma]{clm:singleSource:twoBadEdges}
Let $G$ be an upward planar single-source digraph with maximum in- and outdegree at most two.
Then in any upward planar embedding of $G$, there are at most two bad edges with respect to the same face. 
\end{lemma}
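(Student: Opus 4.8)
The plan is to analyze the switch structure of a single face $f$ in a fixed upward planar embedding and to argue that each bad edge with respect to $f$ forces a distinct source of $G$ to appear on the boundary of $f$, so that three bad edges would force three sources and contradict the single-source hypothesis. First I would recall from the preliminaries that a bad edge $e=(u,v)$ with respect to $f$ is one that is a left (resp.\ right) outgoing edge of $u$ and a left (resp.\ right) incoming edge of $v$, with $f$ on the appropriate side; in a hypothetical $2$-slope representation this means $v$ would span a small angle at $f$ while being a sink of $f$, and symmetrically for $u$. The key local observation is that a bad edge is incident to a vertex $v$ that behaves like a local sink of $f$ spanning a small angle, which in turn forces extra local sources (large-angle source-switches) on $f$ by the switch-counting balance for $f$.

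The main structural tool I would invoke is the counting relation stated in the preliminaries: an inner face contains exactly four more small switches than large switches, and the outer face contains four more large switches than small switches. By \cref{clm:singleSource:outerFace} we may restrict attention to an inner face $f$, where the outer-face case is already excluded. I would set up a charging argument: each bad edge with respect to $f$ contributes a ``small sink-switch'' (a local sink spanning a small angle) on the boundary of $f$. Since $f$ is an inner face with the surplus of small switches being exactly four, and since local sinks and local sources of $f$ must alternate around $f$ in a way controlled by the angle balance, I would argue that each small sink-switch must be ``paid for'' by a corresponding large source-switch, and that a large source-switch on an inner face is a source of $G$ (its two outgoing edges both leave into $f$ spanning a large angle, forcing it to be a genuine source). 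Two bad edges can be accommodated by the single source plus the surplus, but a third would force a second genuine source of $G$, contradicting single-sourcedness.

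I expect the main obstacle to be making precise the bookkeeping that converts ``three bad edges'' into ``two distinct sources of $G$.'' The delicate point is that a bad edge creates a small-angle obstruction at a \emph{sink} of $f$, and I must translate surplus small switches into the existence of large \emph{source}-switches that are simultaneously global sources of $G$; this requires arguing that a large source-switch on an inner face cannot have any incoming edge, i.e.\ that its two $f$-incident edges are its only edges up to degree bound, using the maximum in- and outdegree at most two to rule out the vertex being an internal transit vertex. The cleanest route I would try is to reason directly about the cyclic sequence of sources and sinks of $f$: bad edges with respect to $f$ correspond to sinks of $f$ that are forced to be small, and by the alternation of sources and sinks around the face boundary each additional forced small sink raises the required number of large source-switches by one. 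I would formalize this by considering the hypothetical consistent $2$-slope assignment on $f\setminus\{\text{bad edges}\}$ and counting, then showing that two local sources of $f$ that are large switches are both sources of $G$, so three bad edges would yield at least two sources of $G$, the desired contradiction.
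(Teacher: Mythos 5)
Your overall strategy---push from bad edges to forced-small sinks of $f$, from those to forced-large local sources, and from those to sources of $G$---is the same as the paper's, but the final bookkeeping has a genuine gap. First, you charge one small sink-switch per bad edge, but a single sink of $f$ can absorb \emph{two} bad edges (both of its incoming edges), as in \cref{fig:singleSource:twoBadEdges}~(b); so three bad edges guarantee only \emph{two} forced-small sinks. Second, two forced-small sinks guarantee only \emph{one} large local source of $f$, not two: your claim that ``each small sink-switch must be paid for by a large source-switch'' already fails for a rectangular diamond face, which has a small sink and no large source at all, and \cref{fig:singleSource:twoBadEdges}~(a) shows a face with \emph{four} bad edges, two forced-small sinks, and still only one large source. (The correct relation is that among the $2k$ local sources and sinks of an inner face exactly $k-1$ span large angles, so $j$ forced-small sinks yield only $j-1$ forced-large sources.) Hence your plan to exhibit \emph{two} sources of $G$ on the boundary of $f$ cannot be carried out from three bad edges, and your framework explicitly tolerates a single large source-switch as ``accommodated by the single source,'' so the argument does not close.

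The missing idea is that \emph{one} large source-switch of an inner face is already a contradiction. The unique source $s$ of a single-source upward planar digraph is the bottommost vertex; the angle at $s$ containing the downward direction lies in the outer face, so $s$ spans at most a small ($90^{\circ}$) angle at any inner face. On the other hand, a large source-switch $u$ of the inner face $f$ is a genuine source of $G$: its $270^{\circ}$ wedge in $f$ contains the downward direction, so an incoming edge of $u$ would have to lie inside $f$ between the two boundary edges, which is impossible. Thus $u$ is a source of $G$ distinct from $s$, and $G$ has two sources. This is precisely how the paper concludes (``$u$ is also a source of $G$ and since it is not the source for the outer face, it is not the only source of $G$''). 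If you correct the charging to two bad edges per sink and replace your ``two large sources'' target by this observation, your argument coincides with the paper's proof.
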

\begin{proof}
  Let $G$ be an upward plane single-source digraph and let $f$ be a face of $G$. 
  Note that a bad edge $e$ with respect to $f$ is incident to a local sink $v$ of a face $f$ 
  where $v$ lies between its two incoming edges in a counterclockwise traverse of the boundary of $f$.
  In other words, in an upward planar drawing of $G$, $f$ would have a (small) angle of less than 180$^\circ$ at $v$.
  If there are three or more bad edges with respect to $f$, 
  then $f$ contains at least two such local sinks (like $v_1$ and $v_2$ in \cref{fig:singleSource:twoBadEdges}~(b)).
  There is then at least one local source $u$ for $f$ that would span a large angle in $f$, i.e., more than 180$^\circ$.
  However,~$u$ is also a source of $G$ and since it is not the source for the outer face, it is not the only source of $G$.
  This is a contradiction to $G$ being a single-source digraph. 
  \Cref{fig:singleSource:twoBadEdges}~(c) gives an example of a face with two bad edges.
\end{proof}

\begin{figure}[htb]
  \centering
  \includegraphics{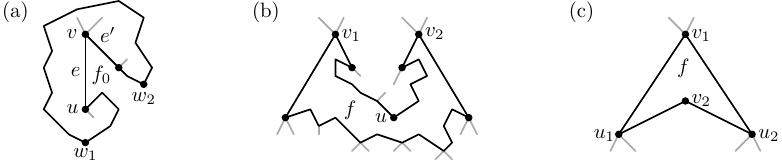}
  \caption{(a) A bad edge with respect to the outer face~$f_0$ implies the existence of at least two sources~$w_1$ and~$w_2$;
  (b)~a face~$f$ that has four bad edges of which two are adjacent to the local sink~$v_1$ of~$f$ and 
  two are adjacent to the local sink~$v_2$ of~$f$. However,~$f$ also contains a source~$u$; 
  (c)~a face~$f$ with two bad edges~$(u_1, v_1)$ and~$(u_2, v_1)$ that can be a face of a single source digraph.}
  \label{fig:singleSource:twoBadEdges} 
\end{figure}

\begin{lemma} \label[lemma]{clm:singleSource:fixBadEdge}
Let $G$ be an upward plane single-source digraph with maximum in- and outdegree at most two
and with bad edges $\set{e_1, e_2, \ldots, e_k}$.
Let~$e_1$ be a bad edge with respect to face~$f$.
Deciding whether~$G$ admits an upward planar embedding with bad edges~$\set{e_2, \ldots, e_k}$
can be done in~$\Oh(\abs{f})$ time.
\end{lemma}
\begin{proof}
  Let~$e_1 = e = (u, v)$, let~$e_u$ be the second outgoing edge of~$u$, 
  and let~$e_v$ be the second incoming edge of~$v$.
  Note that~$e_v$ and~$e_u$ are also on the boundary of~$f$
  and that by \cref{clm:singleSource:outerFace},~$f$ is not the outer face.
  We claim that~$e$ cannot be a bridge. 
  Assume otherwise and let~$G_u$ and~$G_v$ be the components of~$G \setminus \set{e}$ that contain~$u$ and~$v$, respectively.
  Note that both~$G_u$ and~$G_v$ have a source each and~$v$ cannot be the source of~$G_v$. 
  This implies that~$G$ has two sources, which is a contradiction to~$G$ being a single-source digraph. 
  Let~$f'$ be the second face with~$e$ on its boundary, which exists since~$e$ is not a bridge. 
  Without loss of generality, assume that~$f'$ is to the left and~$f$ to the right of~$e$.
  
  For $G$ to admit an upward planar embedding where $e$ is not a bad edge,
  it must be possible to change, without loss of generality, the order of $e_u$ and $e$ at $u$,
  that is, $e_u$ and $e$ need to become the left and right outgoing edges of~$u$, respectively.
  If $e_u$ is a bridge (and thus has $f$ as left and right face),
  then we can swap $e$ and $e_u$ at $u$
  and flip the component that does not contain $u$ of $G \setminus \set{e_u}$; see \cref{fig:singleSource}~(a).
  We can find out whether $e_u$ is a bridge with a single traverse of $f$.
  Furthermore, clearly, this flip does not introduce a new bad edge.
  
  \begin{figure}[htb]
  \centering
  \includegraphics[page=1]{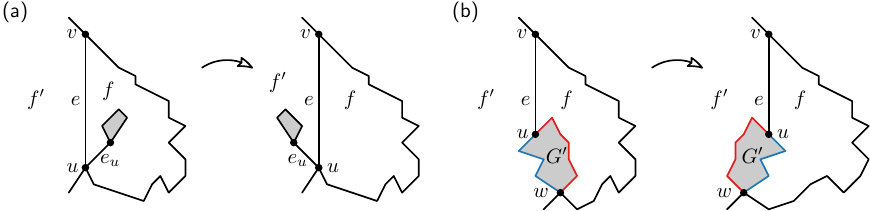}
  \caption{Two scenarios where a bad edge~$e$ with respect to~$f$ in a single-source digraph can be repaired: 
	  (a)~$e_u$ is a bridge and can be flipped from $f$ to $f'$;
	  (b)~the digraph~$G'$ between $f$ and $f'$ and between the split pair~$\set{w, u}$ can be flipped.} 
  \label{fig:singleSource}
  \end{figure}
  
  Otherwise, if~$e_u$ is not a bridge, 
  observe that we need to flip a subgraph~$G'$ of~$G$ 
  that contains~$e_u$ and that is enclosed by~$f'$ and~$f$; see \cref{fig:singleSource}~(b).
  For such~$G'$ to exist,
  there must be a vertex~$w$ that forms a split pair with~$u$,
  and that has~$f$ to the right and~$f'$ to the left.
  It can be seen as a split pair~$\set{u, w'}$
  without this property does not yield a flippable digraph~$G'$ (as in \cref{fig:singleSourceBad}~(a))
  and if no such a split pair exists at all, then the triconnectedness implies 
  that~$e_v$,~$e$, and~$e_u$ always lie on the boundary of the same face (as in \cref{fig:singleSourceBad}~(b)). 
  Assuming now that such~$w$ exists, 
  we can define~$G'$ as the digraph consisting of all split components of~$G$ 
  with respect to~$\set{w, u}$ that do not contain~$v$.
  To repair~$e$, we flip~$G'$ as shown in \cref{fig:singleSource}~(b), that is, 
  we reverse the order of incoming and outgoing edges for each vertex in~$G'$
  except for~$w$, where we only reverse the order of the outgoing edges.
  Note that the flip cannot introduce a new bad edge since~$w$ is not a local sink or local source of~$f$ or~$f'$.
  Furthermore, note that such~$w$ precedes~$u$, since 
  otherwise both~$G'$ and~$G \setminus G'$ would contain a source (that is not~$w$) each,
  which contradicts~$G$ being a single-source digraph; see \cref{fig:singleSourceBad}~(c).
  Hence, we may find~$w$ with a traverse of~$f$.
  
  \begin{figure}[htb]
  \centering
  \includegraphics[page=2]{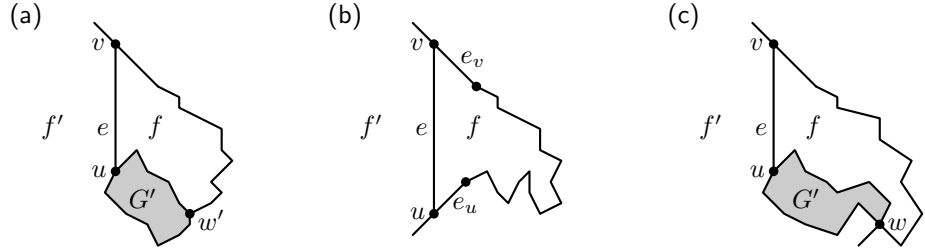}
  \caption{Scenarios where a bad edge~$e$ with respect to~$f$ in a single-source digraph
  cannot be repaired (a)~since~$G'$ cannot be flipped
  or (b)~since there is no split pair~$\set{w, u}$;  
  (c)~here $G'$ can be flipped, but the digraph is not a a single-source digraph.} 
  \label{fig:singleSourceBad}
  \end{figure}
  
  Lastly, we show that changing the order of $e$ and $e_v$ at $v$
  is possible only when the case above applies where the order of $e$ and $e_u$ at $u$ can be changed.
  To begin with, note that $e_v$ cannot be a bridge, since~$G$ is a single-source digraph.
  Hence, we would need to flip again a digraph $G'$ that contains~$e_v$ and that is enclosed by $f$ and $f'$.
  Such $G'$ would imply a split pair $\set{w', v}$,
  where, however, $w'$ would also serve as split pair $\set{w', u}$ as in \cref{fig:singleSource}~(b).
  
  Since we can check whether $e_u$ is a bridge or
  find a suitable $w$ with a single traverse of $f$,
  the claim on the running time holds.
\end{proof}

From \cref{clm:singleSource:fixBadEdge},
we get that an upward plane single-source digraph admits an upward planar 2-slope drawing
(possibly for a different upward planar embedding)
if every bad edge can be fixed. We may thus check every bad edge and perform the necessary flips.
Since digraphs to flip may be nested, executing simply one flip after the other could be costly.
We now show how to keep the running time linear. 

\begin{theorem} \label{clm:singleSource:algorithm}
Let~$G$ be an upward planar single-source digraph with $n$ vertices.
An upward planar 2-slope drawing of $G$ can be computed in $\Oh(n)$ time, if one exists.
\end{theorem}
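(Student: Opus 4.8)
The plan is to combine the structural characterisation from \cref{clm:singleSource:characterisation} with an efficient way of detecting and executing the required flips. First I would compute an arbitrary upward planar embedding of $G$ using the linear-time algorithm of Bertolazzi \etal~\cite{BDMT98}, and reduce $G$ to a leaf-free digraph as described at the start of \cref{sec:variable} (recording the removed leaves so they can be restored afterwards in linear time). I would then scan the embedding once to locate every bad edge; by \cref{clm:singleSource:twoBadEdges} each face carries at most two bad edges, so there are $\Oh(n)$ of them in total. For each bad edge $(u,v)$ the task is to decide whether there is a split pair $\set{w,u}$ of the type demanded by \cref{clm:singleSource:characterisation} and, if so, to flip the corresponding split component so that $(u,v)$ stops being bad.

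The natural data structure for reasoning about split pairs is the SPQR-tree of $G$, which can be built in $\Oh(n)$ time. The key observation I would exploit is that the split pair $\set{w,u}$ required for a bad edge $(u,v)$ is the one closest to $u$ (as in the proof of \cref{clm:singleSource:characterisation}), and that this closest split pair, together with whether exactly the two outgoing edges of $w$ lie in the relevant split component, can be read off locally from the SPQR-tree node associated with $u$ and the edges incident to it. Concretely, I would walk from the leaf/edge of the SPQR-tree corresponding to $(u,v)$ toward its neighbouring S- or R-node, identify the candidate $w$ that precedes $u$, and test the split-component condition of \cref{clm:singleSource:characterisation}; each such test touches only a constant amount of local information plus the virtual edges of one tree node, so it costs $\Oh(1)$ amortised after the tree is set up. If any bad edge fails the test, I report that no drawing exists, which is correct by \cref{clm:singleSource:characterisation}.

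The remaining difficulty, and the step I expect to be the main obstacle, is performing all the necessary flips within the $\Oh(n)$ budget, since the split components to be flipped may be nested and a naive ``flip one after another'' would reverse the same vertex orderings repeatedly, giving a worst-case quadratic cost. The plan is to avoid materialising each flip immediately: instead I would record, for each split component that must be flipped, a single flip request attached to the appropriate SPQR-tree node, and then resolve these requests in one final pass. Because flipping is an involution, nested flip requests can be aggregated by propagating a boolean ``reverse'' flag down the tree and taking the parity (XOR) along each root-to-node path; a vertex's edge ordering is ultimately reversed exactly when the cumulative parity of flips covering it is odd. One subtlety is to confirm that flips requested by distinct bad edges never conflict: by \cref{clm:singleSource:characterisation} a required flip only reverses the two outgoing edges of a single pole $w$ and leaves the source and sink structure of the adjacent faces intact (as argued in its proof), so no flip can turn a good edge bad, and the requests are mutually compatible. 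After the parity flags are fixed, I would realise the combined embedding change in one linear sweep, apply \cref{clm:plane} to the now bad-edge-free upward plane digraph to obtain an upward planar $2$-slope drawing, and finally reinsert the previously removed leaves. Summing the $\Oh(n)$ cost of embedding, leaf reduction, SPQR-tree construction, the flip-detection scan, the parity resolution, and the drawing step from \cref{clm:plane} gives the claimed $\Oh(n)$ running time.
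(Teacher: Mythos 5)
Your proposal is correct and follows essentially the same route as the paper: compute one embedding with the algorithm of Bertolazzi \etal, detect the $\Oh(n)$ bad edges (at most two per face by \cref{clm:singleSource:twoBadEdges}), test each against the condition of \cref{clm:singleSource:characterisation}, and resolve the possibly nested flips in a single linear pass by tracking the parity of flip boundaries before applying \cref{clm:plane}. The only real difference is the machinery for one sub-step: you route the split-pair lookup and the parity aggregation through an SPQR-tree, whereas the paper finds $w$ by a simple traversal of the face containing the bad edge, marks $w$ and $u$ as the beginning and end of the component to be flipped, and propagates the flip parity with a BFS from the source --- same idea, lighter implementation.
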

\begin{proof}
As noted above, we may assume that~$G$ does not contain a leaf.
A linear-time algorithm to compute an upward planar 2-slope drawing of~$G$ then works as follows.
First, compute an upward planar embedding of~$G$ with the algorithm of Bertolazzi \etal~\cite{BDMT98} in~$\Oh(n)$ time.
Second, to identify all bad edges in~$\Oh(n)$ time it suffices to traverse the boundary of each face
since every bad edge is bad with respect to exactly one face. 
Third, for each bad edge~$e = (u, v)$ with respect to a face~$f$, 
check whether it can be repaired with \cref{clm:singleSource:fixBadEdge}.
To keep track of where we have to flip the edge order at vertices, 
we use two types of markers.
A green marker at a vertex~$x$ indicates
that the outgoing edges of~$x$ should be swapped
and that both incoming and outgoing edges of the vertices ``above''~$x$
should be reversed.
We thus mark~$u$ green if~$e$ can be repaired because the respective edge~$e_u$ is a bridge
and we mark~$w$ green if~$e$ can be repaired because of a respective split pair~$\set{w, u}$.
Furthermore, we mark~$e$ red to tells us to stop reversing edge orders.
Both the check and the marking can be done in~$\Oh(\abs{f})$ time.
Since by \cref{clm:singleSource:twoBadEdges} any face contains at most two bad edges, 
this step takes overall also only~$\Oh(n)$ time.

Suppose every bad edge can be fixed.
Then run a BFS on~$G$ that starts at the source.
During the traversal, remember along each path the number of green marked vertices
minus the number of encountered red edges. 
Then for each vertex~$v$, use the parity of this number to decide whether its edge orders have to be reversed;
if odd, then reverse, and if even, then do not.
Vertices marked green need to be handled appropriately.
This takes again only linear time and as a result we get an upward planar embedding of~$G$ that admits a 2-slope representation~$U_G$.
Finally, apply the algorithm from \cref{clm:plane} on~$U_G$ to compute an upward planar 2-slope drawing of~$G$.
\end{proof}

Note that in \cref{clm:singleSource:fixBadEdge}, whether a bad edge is bad in any upward planar embedding of $G$ is independent from
whether another edge is bad in any upward planar embedding of $G$.
Hence, we can also use \cref{clm:singleSource:algorithm} and \cref{clm:plane:bends} to minimise the number of bends
in a 1-bend 2-slope drawing of $G$.

\begin{corollary} \label[corollary]{clm:planar:singleSource:oneBend}
Let $G$ be an upward planar single-source digraph with $n$ vertices and maximum in- and outdegree two.
An upward planar 1-bend 2-slope drawing of $G$ with the minimum number of bends can be computed in $\Oh(n)$ time.
\end{corollary}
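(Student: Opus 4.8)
The plan is to combine the three tools already built in the paper: \cref{clm:planar:transitiveEdge} together with the single-source analysis of \cref{clm:singleSource:characterisation,clm:singleSource:algorithm} to decide which bad edges are unavoidable, and \cref{clm:plane:bends} to convert the unavoidable ones into bends. The key observation is that the remark immediately preceding the statement already tells us the structure we need: in a single-source digraph, whether a given bad edge $(u,v)$ can be repaired by a flip depends only on the local split-pair condition of \cref{clm:singleSource:characterisation}, and this condition is \emph{independent} across distinct bad edges. So there is a well-defined set of edges that remain bad in \emph{every} upward planar embedding of $G$, and a complementary set that can simultaneously be made non-bad.

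First I would run the algorithm of \cref{clm:singleSource:algorithm}: compute an upward planar embedding of $G$ via Bertolazzi \etal, identify all bad edges by traversing each face, and test the split-pair condition of \cref{clm:singleSource:characterisation} for each. Let $B$ be the set of bad edges that \emph{cannot} be repaired (those failing the condition), and perform the BFS-based flips to fix all the repairable ones; by \cref{clm:singleSource:algorithm} this costs $\Oh(n)$ and yields an embedding in which the only remaining bad edges are exactly those in $B$. Next I would subdivide each edge of $B$ once, as in the proof of \cref{clm:plane:bends}, obtaining a graph $G'$ with no bad edges, and note that $G'$ is still single-source with maximum in- and outdegree two. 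Applying \cref{clm:plane} (equivalently the construction inside \cref{clm:singleSource:algorithm}) to $G'$ produces an upward planar 2-slope drawing, and turning each subdivision vertex back into a bend gives an upward planar 1-bend 2-slope drawing of $G$ with exactly $\abs{B}$ bends, all steps running in linear time.

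For the optimality claim I would argue that $\abs{B}$ is a lower bound over \emph{all} embeddings. The independence remark is precisely what makes this work: \cref{clm:singleSource:characterisation} characterises badness of each edge as an intrinsic property (does the requisite split pair exist?) that no choice of embedding can alter, so every edge of $B$ is bad in every embedding. As in \cref{clm:plane:bends}, bending a non-bad edge never removes the obstruction caused by a bad edge, so at least one bend is forced per edge of $B$ regardless of embedding; hence $\abs{B}$ bends are necessary. This gives both the existence of a drawing with $\abs{B}$ bends and its optimality.

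The main obstacle I anticipate is making the lower-bound argument fully rigorous rather than merely plausible. The phrase ``bending a non-bad edge clearly does not help'' in \cref{clm:plane:bends} is stated for a \emph{fixed} embedding, whereas here we must rule out that some clever combination of a different embedding together with a single bend on a non-bad edge could eliminate an edge of $B$. The clean way to handle this is to appeal to the embedding-independence established in \cref{clm:singleSource:characterisation}: an edge of $B$ is bad with respect to \emph{its} incident face in every embedding, so it genuinely requires a bend in every embedding, and the per-edge lower bound then follows edge-by-edge. I would want to state this carefully, since conflating ``bad in the current embedding'' with ``bad in all embeddings'' is exactly the subtlety the corollary's independence remark is designed to resolve.
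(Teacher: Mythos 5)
Your proposal is correct and follows exactly the route the paper intends: the corollary is stated without a separate proof, justified only by the preceding remark that badness of an edge in all embeddings is decided independently per edge via \cref{clm:singleSource:characterisation}, so that \cref{clm:singleSource:algorithm} and \cref{clm:plane:bends} combine directly. Your additional care about the lower bound (distinguishing ``bad in the current embedding'' from ``bad in every embedding'') makes explicit precisely the point the paper's independence remark is relying on.
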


\subsection{SPQR-trees and upward spirality} \label{sec:spqr} 
Didimo \etal~\cite{DGL10} described algorithms to compute upward planar embeddings of biconnected series-parallel and general digraphs.
They then use a result from Healy and Lynch~\cite{HL07} about combining biconnected blocks of upward planar digraphs 
to get rid of the biconnectivity condition.
We follow their approach closely.
In particular, we also use the notions of SPQR-trees and upward spirality (with the latter tailored to our needs)
on which Didimo et al.'s approach heavily relies on and tailor them to our needs.
We refer to Didimo \etal~\cite{DGL10} for the precise definition of SPQR-trees (for undirected graphs and then derived for digraphs),
though recall the main concepts in this section. 

Let $G$ be a biconnected digraph and let $e = (s, t)$ be any edge of $G$ called \emph{reference edge}.
An \emph{SPRQ-tree} $T$ of $G$ with respect to $e$ represents a decomposition of $G$ with respect to its triconnected components~\cite{DT89,GM01}.
As such, it also represents all planar embeddings of $G$ with $e$ on the outer face.
Starting with the split pair $\set{s, t}$, the decomposition is constructed recursively on the split pairs of $G$.
More precisely, $T$ is a rooted tree where every node is of type S, P, Q, or R:
Q-nodes represent single edges, S-nodes and P-nodes represent \emph{series components} and \emph{parallel components}, and
R-nodes represent triconnected (\emph{rigid}) components; see \cref{fig:SPQRexample} for an example. 
Each node~$\mu$ in~$T$ has associated a biconnected multigraph~$\skel(\mu)$, called the \emph{skeleton} of~$\mu$, 
in which the children of~$\mu$ and its reference edge are represented by a virtual edge. 
The root of~$T$ is a Q-node representing~$(s, t)$. The child of~$\mu$ is now defined recursively as follows:
\begin{description}
  \item[Trivial case.] If $G$ consists of exactly two parallel edges between $s$ and $t$, 
  then $\mu$ is a Q-node representing the edge $(s, t)$ parallel to the reference edge and the skeleton $\skel(\mu)$ is $G$.
  \item[Parallel case.] If the split pair $\set{s, t}$ has three or four split components $G_1, G_2, \ldots, G_{k}$ 
  (more are not possible with our degree restrictions), then $\mu$ is a P-node. 
  The skeleton $\skel(\mu)$ consists of $k$ parallel edges between $s$ and $t$ that represent
  the reference edge $G_1$ and the components~$G_i$, $2 \leq i \leq k$.
  \item[Series case.] If $\set{s, t}$ induces two split components $e = (s, t)$ and $\bar G$, then $\mu$ is an S-node.
  If $\bar G$ is a chain of biconnected components $G_1, \ldots, G_k$ with cut vertices $c_1, \ldots, c_{k-1}$ ($k \geq 2$),
  then $\skel(\mu)$ is the cycle $t$, $e$, $s$, $e_1$, $c_1$, $\ldots$, $c_{k-1}$, $e_k$, $t$ where~$e_i$ represents~$G_i$.
  \item[Rigid case.] Otherwise $\mu$ is an R-node.
  Let $\set{s_1, t_1}$, \ldots, $\set{s_k, t_k}$ be the maximal split pairs of~$G$ with respect to $\set{s, t}$.
  Let $G_i$ be the union of split components of $\set{s_i, t_i}$ except the one containing $e$.  
  Then $\skel(\mu)$ is obtained from $G$ by replacing each subgraph $G_i$ with $e_i$.
\end{description}
The skeleton of the root consists of two parallel edges, $e$ and a virtual edge representing the rest of the graph.
Each node $\mu$ of $T$ that is not a Q-node has children $\mu_1, \ldots, \mu_k$, in this order such that
$\mu_i$ is a child of $\mu$ based on $G_i \cup e_i$ with respect to $e_i$. 
The \emph{pertinent graph} $G_{\mu}$ for a node $\mu$ of $T$ represents the full subgraph of $G$ in the SPQR-tree rooted at $\mu$.
The end vertices of $e_i = (s_i, t_i)$ are called the \emph{poles} of $\mu$, of $\skel(\mu)$, and of $G_{\mu}$.
Refer to \cref{fig:SPQRexample} for an example and note that every edge is represented by a Q-node.
We assume that $T$ is in its \emph{canonical form}, that is, every S-node of $T$ has exactly two children. 
The canonical form can be derived from a non-canonical form in linear time~\cite{DGL10}.

\begin{figure}[htb]
  \centering 
  \includegraphics{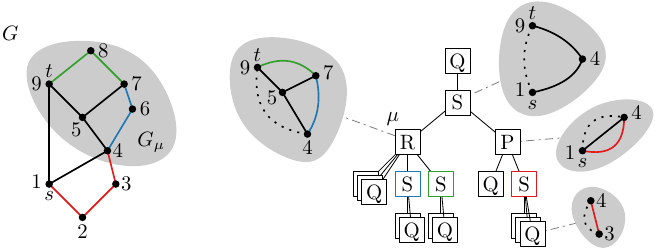}
  \caption{A digraph $G$ and its (non-canonical) SPQR-tree with respect to the reference edge $\set{s,t}$.
  The skeletons of some nodes are depicted with the virtual edge drawn dashed. 
  The pertinent graph $G_\mu$ of the R-node $\mu$ is highlighted in $G$. }
  \label{fig:SPQRexample}
\end{figure}

Assume that $G$ is equipped with an st-numbering (based on its underlying undirected graph and with respect to the reference edge $\set{s,t}$).
Let $\mu$ be a node of $T$ with poles $u$ and $v$ such that $u$ precedes $v$ in the st-numbering.
Then $u$ is the \emph{first pole} and $v$ is the \emph{second pole} of $\mu$.

Assume that a pertinent graph $G_\mu$ of a node $\mu$ of $T$
admits an upward planar 2-slope drawing and let $U_{G_\mu}$ be a $2$-slope representation of $G_\mu$.
Let~$u$ and $v$ be the poles of $\mu$ and let $w \in \set{u, v}$.
The \emph{pole category} $t_w$ of the pole $w$ is the way the edges that lie in the outer face of ${G_\mu}$ 
are incident to $w$ and which slopes they got assigned under $U_{G_\mu}$. 
Note that edges incident to $w$ that lie in the interior of ${G_\mu}$ do not affect the pole category of~$w$.
The sixteen possible pole categories of split components are shown in \cref{fig:poleCategories}.
Two poles~$w$ and $w'$ with pole category $t_w$ and $t_w'$, respectively, are \emph{compatible}
if~$t_w$ and~$t_w'$ can be combined into a pole category of higher degree. 
For example, combining iR and iL gives~iRiL.

\begin{figure}[htb]
  \centering 
  \includegraphics{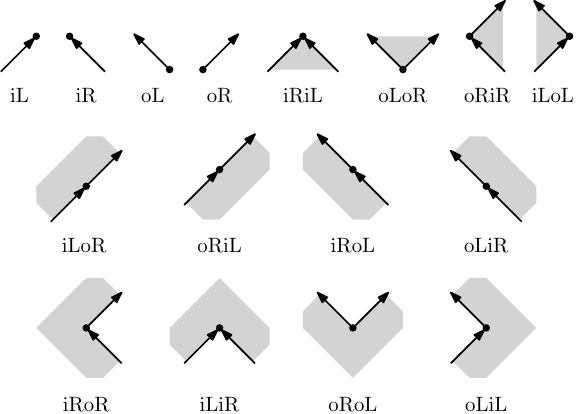}
  \caption{Pole categories in a 2-slope representation; 
  gray areas indicate where the interior of the graph lies.}
  \label{fig:poleCategories}
\end{figure}

Next we define upward spirality, which measures how much a split component is ``rolled up''.
The pertinent graph~$G_\mu$ of a node~$\mu$ of~$T$ may have 2-slope representations with different spirality.
Let~$U_{G_\mu}$ be a 2-slope representation of~$G_\mu$ with first pole~$u$ and second pole~$v$.
Let~$P$ be an undirected path in~$U_{G_\mu}$. 
Two subsequent edges~$\set{x, y}$ and~$\set{y, z}$ of~$P$ define a \emph{right (left) turn} 
if~$P$ makes a 90$^{\circ}$ clockwise (resp.\ counterclockwise) turn at~$y$ according to~$U_{G_\mu}$. 
Define the \emph{turn number}~$\n(P)$ of~$P$ as the number of right turns minus the number of left turns of~$P$.
Let~$P_l$ and~$P_r$ be the clockwise and counterclockwise paths from~$u$ to~$v$ along the outer face, respectively.
We define the \emph{upward~$2$-slope spirality}~$\sigma$ of~$U_{G_\mu}$ as 
$\sigma(U_{G_\mu}) = \frac{\n(P_l) + \n(P_r)}{2}\text{.}$ 
See \cref{fig:spirality} for examples. 
Note that the turn number of a path is bounded by its length.
Therefore, the number of possible values for the upward~$2$-slope spirality of~$U_{G_\mu}$ is in~$\Oh(n)$ 
(and in~$\Oh(d)$ where~$d$ is the diameter of~$U_{G_\mu}$).
For technical reasons that become apparent later, 
when we store the upward 2-slope spirality of a 2-slope representation,
we also store the values~$\n(P_l)$ and~$\n(P_r)$. 

\begin{figure}[htb]
  \centering
  \includegraphics[page=1]{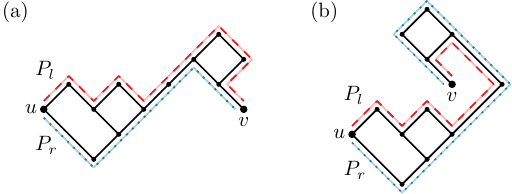}
  \caption{Two different $2$-slope representations of the same digraph for different upward planar embeddings: 
  (a) The paths $P_l$ and $P_r$ have turn number $1$ and $0$, and so the upward $2$-slope spirality is $0.5$;
  (b) the paths $P_l$ and $P_r$ have turn number $-3$ and $-4$, and so the upward $2$-slope spirality is $-3.5$}
  \label{fig:spirality}
\end{figure}

Let $\mu$ be a node of $T$ with first pole $u$ and second pole $v$.
A \emph{feasible tuple} of $\mu$ is a tuple $\tau_\mu = \langle U_{G_{\mu}}, \sigma, t_u, t_v  \rangle$
where $U_{G_{\mu}}$ is an upward 2-slope representation of $G_\mu$ with pole categories $t_u$ and $t_v$ 
and upward 2-slope spirality $\sigma$. 
Two feasible tuples of $\mu$ are \emph{spirality equivalent}
if they have the same upward $2$-slope spirality and pole categories. 
A \emph{feasible set} $\cF_\mu$ of $\mu$ is a set of all feasible tuples of $\mu$
such that there is exactly one representative tuple for each class of spirality equivalent tuples of $\mu$.

\subsection{Series-parallel digraphs} \label{sec:sp} 
Let $G$ be a biconnected series-parallel digraph, that is, an SPQR-tree (or rather SPQ-tree) of $G$ contains no R nodes.
Our goal is to test the existence of an upward planar embedding that does not contain a bad edge and thus the existence of a 2-slope representation of $G$.
The idea of the algorithm is as follows.
Pick a reference edge $e$ and construct the respective SPQR-tree $T$ of $G$.
In a post-order traversal of $T$, compute the feasible set of each node.
This is straightforward for Q-nodes. 
For an S- or P-node $\mu$, try to combine feasible tuples of the children of $\mu$ to feasible tuples of $\mu$.
The pole categories and upward $2$-slope spirality values make it easier to check
whether a composition admits an upward planar $2$-slope representation.
If this leads to a non-empty feasible set for the root of $T$, we can construct a drawing. 
Otherwise, we try again with another reference edge.
 
Let $\mu$ be a Q-node of $T$ with first pole $u$ and second pole $v$
and suppose $G_\mu$ is the edge $(u, v)$; the case where $G_\mu$ is $(v, u)$ is handled analogously.
Then there are only two upward $2$-slope representation of~$G_\mu$, 
namely when $(u, v)$ is assigned the slope $\sL$ or the slope $\sR$. Therefore, $\cF_\mu$ has size two. 
The following two lemmata show how to compute the feasible set of an S-node and a P-node of $T$.

\begin{lemma} \label[lemma]{clm:Snode}
Let $G$ be a biconnected digraph with $n$ vertices and $T$ be an SPQR-tree of $G$.
Let $\mu$ be an S-node of $T$ with children $\mu_1$ and $\mu_2$. 
Given the feasible sets $\cF_{\mu_1}$ and $\cF_{\mu_2}$, 
the feasible set $\cF_{\mu}$ can be computed in $\Oh(n^2)$ time.
\end{lemma}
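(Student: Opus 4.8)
The plan is to compute $\cF_\mu$ by a brute-force composition of the feasible sets of the two children, exploiting that upward $2$-slope spirality behaves additively under series composition. Recall that an S-node $\mu$ in canonical form has exactly the two children $\mu_1$ and $\mu_2$, and that its pertinent graph $G_\mu$ is the \emph{series composition} of $G_{\mu_1}$ and $G_{\mu_2}$: writing $u_1,v_1$ and $u_2,v_2$ for the first and second poles of $\mu_1$ and $\mu_2$, the graph $G_\mu$ is obtained by identifying $v_1$ with $u_2$ into a single cut vertex $c$, and the poles of $\mu$ are $u = u_1$ and $v = v_2$. Since $c$ is the only vertex shared by the two pertinent graphs, every $2$-slope representation of $G_\mu$ arises by gluing a feasible tuple $\tau_1 \in \cF_{\mu_1}$ to a feasible tuple $\tau_2 \in \cF_{\mu_2}$ along $c$. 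Flips of a child need not be treated separately, as they already appear as distinct tuples of the corresponding feasible set.

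First I would iterate over all pairs $(\tau_1, \tau_2) \in \cF_{\mu_1} \times \cF_{\mu_2}$ and test whether they can be glued. The only constraints introduced by the identification concern the vertex $c$: the pole category $t_{v_1}$ carried by $\tau_1$ describes the slopes and the embedding side of the edges of $G_{\mu_1}$ at $c$, and $t_{u_2}$ does the same for $G_{\mu_2}$. The pair is \emph{gluable} exactly when $t_{v_1}$ and $t_{u_2}$ are compatible, i.e.\ when they merge into a single pole category of higher degree while respecting the in- and outdegree bound of two and producing no bad edge at $c$. Because identifying a single vertex creates no bounded face but only merges the two outer faces, the only edges whose face adjacencies change are those at $c$; consequently the glued object $U_{G_\mu}$ inherits the consistency of each unchanged part and, by compatibility, is bad-edge-free at $c$, hence a valid $2$-slope representation of $G_\mu$. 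Its pole categories are inherited from the outer poles, $t_u = t_{u_1}$ and $t_v = t_{v_2}$, since within $G_\mu$ the vertex $u_1$ (resp.\ $v_2$) carries only the edges of $G_{\mu_1}$ (resp.\ $G_{\mu_2}$).

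For the spirality I would argue additivity up to a correction at $c$. The clockwise and counterclockwise outer paths $P_l, P_r$ of $U_{G_\mu}$ from $u$ to $v$ each decompose at $c$ into the corresponding outer path of $U_{G_{\mu_1}}$ followed by that of $U_{G_{\mu_2}}$, so that $\n(P_l) = \n(P_l^{G_{\mu_1}}) + \n(P_l^{G_{\mu_2}}) + \mathrm{turn}_l(c)$ and likewise for $P_r$, where $\mathrm{turn}_l(c), \mathrm{turn}_r(c) \in \set{-1, 0, +1}$ are the turns of the two outer paths at the now-internal vertex $c$. Averaging yields
\[
  \sigma(U_{G_\mu}) = \sigma(U_{G_{\mu_1}}) + \sigma(U_{G_{\mu_2}}) + \tfrac{1}{2}\bigl(\mathrm{turn}_l(c) + \mathrm{turn}_r(c)\bigr),
\]
where the correction term depends only on the compatible pair $(t_{v_1}, t_{u_2})$ and is thus read off from a constant-size table. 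Each gluable pair therefore produces in $\Oh(1)$ time a feasible tuple $\langle U_{G_\mu}, \sigma, t_u, t_v \rangle$, which I would record together with pointers to $\tau_1, \tau_2$ (deferring the actual construction of the drawing to a final top-down pass) while keeping only one representative per spirality-equivalence class, i.e.\ per triple $(\sigma, t_u, t_v)$.

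For the running time, note that the turn number of any outer path is bounded by its length, so the upward $2$-slope spirality ranges over only $\Oh(n)$ half-integer values; since there are at most sixteen pole categories, each feasible set has size $\Oh(n)$, in particular $\abs{\cF_{\mu_1}}, \abs{\cF_{\mu_2}} \in \Oh(n)$. Iterating over all $\Oh(n^2)$ pairs and spending $\Oh(1)$ per pair gives the claimed $\Oh(n^2)$ bound, and the deduplicated $\cF_\mu$ again has size $\Oh(n)$. The main obstacle is the case analysis hidden in the gluing step: for every compatible pair of pole categories one must verify that the identification yields a planar, bad-edge-free representation and determine the correct value of $\mathrm{turn}_l(c) + \mathrm{turn}_r(c)$. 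Establishing this finite table carefully, and confirming that the decomposition of $P_l, P_r$ at $c$ is faithful to the clockwise and counterclockwise outer paths of each child, is where the real work lies.
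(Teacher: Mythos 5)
Your proposal is correct and follows essentially the same route as the paper's proof: iterate over all $\Oh(n)\times\Oh(n)$ pairs of feasible tuples, test compatibility of the pole categories at the identified vertex, inherit the outer pole categories, and compute the spirality as $\sigma_1+\sigma_2$ plus the averaged turn correction at the merged pole (your $\mathrm{turn}_l(c),\mathrm{turn}_r(c)$ are exactly the paper's $\alpha_l,\alpha_r$). The deduplication by spirality-equivalence class and the $\Oh(n^2)$ bound also match the paper.
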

\begin{proof}
	To compute a feasible tuple $\tau$ of $\cF_{\mu}$ we check for all pairs
	$\tau_1 \in \cF_{\mu_1}$ and $\tau_2 \in \cF_{\mu_2}$ whether they can be combined. 
	More precisely, let $u$ be the first and $v$ the second pole of $\mu$; refer to \cref{fig:seriesComposition}.
	Let $u_i$ be the first and $v_i$ be the second pole of $\mu_i$, $i \in \set{1, 2}$.
	Without loss of generality, assume that $u = u_1$, $v_1 = u_2$, and $v_2 = v$.
	For each pair of feasible tuples $\tau_1 = \langle U_{G_{\mu_1}}, \sigma_1, t_{u_1}, t_{v_1}  \rangle \in \cF_{\mu_1}$
	and $\tau_2 = \langle U_{G_{\mu_2}}, \sigma_2, t_{u_2}, t_{v_2}  \rangle \in \cF_{\mu_2}$
	check whether $t_{v_1}$ and $t_{u_2}$ are compatible. 
	In other words, check whether $G_{\mu_1}$ and $G_{\mu_2}$ can be plugged together at $v_1 = u_2$ 
	under the slope assignments of $U_{G_{\mu_1}}$ and $U_{G_{\mu_2}}$ and with the reference edge on the outer face.
	
	If the pole categories are compatible, 
	the feasible tuple~$\tau = \langle U_{G_{\mu}}, \sigma, t_{u}, t_{v}  \rangle$ is given by~$t_u = t_{u_1}$, $t_v = t_{v_2}$, 
	$U_{G_\mu}$ as the series composition of~$U_{G_{\mu_1}}$ and~$U_{G_{\mu_2}}$ at the common vertex~$u_2 = v_1$, 
	and where~$\sigma$ can be computed as follows. 
	For~$i \in \set{1, 2}$, let~$e_l^i$ and~$e_r^i$ be the edges of~$U_{G_{\mu_i}}$ that are incident to~$u_2 = v_1$ and lie on the
	clockwise and counterclockwise path from~$u$ to~$v$ along the outer face, respectively; see \cref{fig:seriesComposition}~(b).
	Note that~$e_l^i$ may coincide with~$e_r^i$.
	For $j \in \set{l, r}$, let~$\alpha_j$ be $-1$, $1$, or $0$ depending on whether~$e_j^1$ and~$e_j^2$ make a left, right, or no turn, respectively. 
	The upward~$2$-slope spirality of~$U_{G_\mu}$ is $\sigma = \sigma_1 + \sigma_2 + \frac{\alpha_l + \alpha_r}{2}$
	(compare to Lemma 6.4 by Didimo \etal~\cite{DGL10}).
	Store $\tau$ in~$\cF_\mu$ if~$\cF_\mu$ contains no feasible tuple with spirality equivalent to $\tau$.
	Since $\cF_{\mu_1}$ and $\cF_{\mu_2}$ have~$\Oh(n)$ tuples, $\cF_{\mu}$ can be computed in $\Oh(n^2)$ time.  
\end{proof}

\begin{figure}[t]
  \centering
  \includegraphics{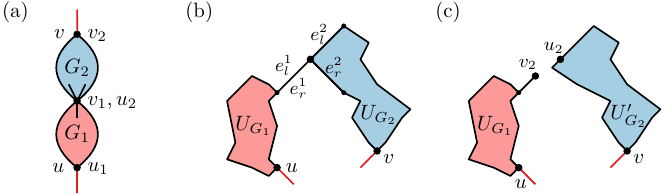}
  \caption{Illustration of a series composition of split components $G_1$ and $G_2$:
  (b) $U_{G_1}$ and $U_{G_2}$ are compatible and the upward $2$-slope spirality can be computed based on $\sigma(U_{G_1})$, $\sigma(U_{G_2})$, 
  and the turns at $e_l^1$ and $e_l^2$ and at $e_r^1$ and $e_r^2$;
  (c) $U_{G_1}$ and $U_{G_2}'$ are not compatible.}
  \label{fig:seriesComposition}
\end{figure}

\begin{lemma} \label[lemma]{clm:Pnode}
Let $G$ be a biconnected digraph with $n$ vertices and $T$ be an SPQR-tree of $G$.
Let $\mu$ be a \mbox{P-node} of $T$ with children $\mu_1, \ldots, \mu_k$ with $k \leq 4$. 
Given the feasible sets $\cF_{\mu_1}, \ldots, \cF_{\mu_k}$, 
the feasible set $\cF_{\mu}$ can be computed in $\Oh(n)$ time.
\end{lemma}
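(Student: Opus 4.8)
The plan is to mirror the series-composition argument of \cref{clm:Snode}, but to exploit the degree restriction and the rigidity of parallel compositions in order to avoid iterating over all $k$-tuples of child representations. First I would fix a left-to-right order of the children $C_1 = G_{\mu_1}, \ldots, C_k = G_{\mu_k}$ in the parallel arrangement. Since each child contributes at least one edge to each of the poles $u$ and $v$, and these poles have total degree at most four, there are only $\Oh(1)$ admissible orders; a mirrored placement of a child is already present in its feasible set as a separate tuple (with swapped pole categories and reflected spirality), so no extra flipping is needed. For each order I would form the candidate pole categories $t_u$ and $t_v$ of $\mu$ by concatenating, in the chosen order, the pole categories of the children at $u$ and at $v$, and keep the order only if both concatenations yield valid pole categories, i.e.\ if consecutive child pole categories are compatible (respecting the degree bound and inducing no slope conflict). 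This leaves $\Oh(1)$ configurations, each consisting of an order together with a choice of pole categories for every child.

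Next I would express the child spiralities as affine functions of a single parameter. The key observation is that whether $C_i$ and $C_{i+1}$ can be combined is decided by their common inner face $f$, which is bounded by the right outer path $P_r^{C_i}$, the left outer path $P_l^{C_{i+1}}$, and the turns forced at $u$ and $v$. Realizability of $f$ is equivalent to the switch-count condition that an inner face has exactly four more small than large switches, which translates into a turn equation $\n(P_r^{C_i}) - \n(P_l^{C_{i+1}}) = D$, where $D$ depends only on the fixed pole categories and the forced pole turns. Writing $\delta_j := \n(P_l^{C_j}) - \n(P_r^{C_j})$, which is likewise determined by the pole categories of $C_j$, we have $\n(P_r^{C_i}) = \sigma(C_i) - \delta_i/2$ and $\n(P_l^{C_{i+1}}) = \sigma(C_{i+1}) + \delta_{i+1}/2$; substituting yields a relation $\sigma(C_{i+1}) = \sigma(C_i) + c_i$ with $c_i$ constant. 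Hence, once $\sigma(C_1)$ is chosen the spiralities of all children are determined, and so is $\sigma(U_{G_\mu}) = (\n(P_l^{C_1}) + \n(P_r^{C_k}))/2$. I would also argue that no new bad edge can appear: an interior edge of a child keeps its left/right role at both endpoints, so only pole-incident edges can change roles, and those are controlled by the already-checked pole-category compatibility.

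Finally I would carry out the enumeration. After indexing each feasible set $\cF_{\mu_i}$ so that a tuple with a prescribed pair of pole categories and a prescribed spirality can be retrieved in $\Oh(1)$ time (an $\Oh(n)$ preprocessing step per child), I would iterate, for each of the $\Oh(1)$ configurations, over the $\Oh(n)$ admissible values of $\sigma(C_1)$. For each value I compute the forced spiralities $\sigma(C_2), \ldots, \sigma(C_k)$, look up a matching tuple in every child, and, if all exist, assemble the representation $U_{G_\mu}$, read off $\sigma(U_{G_\mu})$ and $t_u, t_v$ as above, and insert the resulting feasible tuple into $\cF_\mu$ unless a spirality-equivalent tuple is already stored. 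This visits $\Oh(n)$ candidate spiralities and performs $\Oh(1)$ work each, so $\cF_\mu$ is computed in $\Oh(n)$ total time; note that the linking of the spiralities is exactly what keeps this linear, in contrast to the $\Oh(n^2)$ bound for the unlinked series case.

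I expect the second step to be the main obstacle: establishing that realizability of each shared inner face is precisely the switch-count/turn equation, and that this constraint pins the child spiralities onto a single affine line, with every constant correctly derived from the pole categories. Particular care is needed for degenerate children that have degree one at a pole, where $P_l$ and $P_r$ coincide there, and for verifying that the turn contributions at $u$ and $v$ are indeed functions of the pole categories alone; getting these constants right is what makes the relation $\sigma(C_{i+1}) = \sigma(C_i) + c_i$, and hence the linear running time, valid.
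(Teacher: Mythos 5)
Your proposal follows essentially the same route as the paper: fix a tuple (equivalently, a spirality value) for one child, observe that the degree bound leaves only $\Oh(1)$ admissible orders and pole-category configurations, argue that compatibility of adjacent children pins their spiralities to within a constant of the first child's, and retrieve matching tuples in $\Oh(1)$ time via an indexed/hashed feasible set, giving $\Oh(n)$ overall. The only difference is cosmetic: you assert an exact affine relation $\sigma(C_{i+1})=\sigma(C_i)+c_i$, whereas the paper only needs (and only claims) that compatible spiralities differ by a bounded constant, which already suffices for the linear bound.
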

\begin{proof}
	Let $u$ be the first and $v$ the second pole of $\mu$ (and its children).
	Since $u$ and $v$ have at most degree four in $G$, 
	$\mu$ has at most four children (but at least two).
	We first consider the case where $\mu$ has three children.
	The cases where $\mu$ has two or four children are discussed at the end of this proof. 

	For $i \in \set{1, 2, 3}$, let~$G_i$ be the pertinent digraph of~$\mu_i$.
	Note that~$u$ and~$v$ have degree one in~$G_i$.
	We can thus define $e_i$ and $e_i'$ as the edges of $G_i$ incident to $u$ and $v$, respectively.
	(If $\mu_i$ is a Q-node, then $e_i = e_i'$.)
	Let $e_u$ and $e_v$ be the edges of $G$ incident to $u$ and $v$, respectively, that lie outside of $G_\mu$.
	(Note that $e_u = e_v$ is only possible in the final composition, where $e_u$ is then also the reference edge.)
	We want to construct a $2$-slope representation of~$G_\mu$ 
	such that the order of $e_1, e_2, e_3, e_u$ at $u$ 
	is the reverse order of $e_1', e_2', e_3', e_v$ at $v$; see \cref{fig:parallelComposition}~(a).
	Furthermore, the half edges representing $e_u$ and~$e_v$ have to be on the outer face.
	
	Suppose we pick a feasible tuple $\tau_1 = \langle U_{G_{1}}, \sigma_1, t_{u_1}, t_{v_1}  \rangle \in \cF_{\mu_1}$.
	Then $t_{u_1}$ and $t_{u_2}$ restrict the choices of pole categories of compatible upward $2$-slope representations of~$G_2$ and $G_3$.
	Likewise, $\sigma_1$ restricts these choices further; see \cref{fig:parallelComposition}~(b) and (c).
	More precisely, we observe that the upward $2$-slope spirality $\sigma_2$ and $\sigma_3$ of compatible $U_{G_2}$ and~$U_{G_3}$ 
	differ from~$\sigma_1$ by at least two and at most six (compare to Lemma~6.6 by Didimo \etal~\cite{DGL10}).
	Therefore, when trying all possible permutations of $G_1$, $G_2$, and $G_3$, 
	we only have to consider $\Oh(1)$ feasible tuples in $\cF_{\mu_2}$ and $\cF_{\mu_3}$ instead of iterating over complete sets.
	Storing the feasible tuples in a hash table based on the spirality and pole categories 
	we can find compatible $\tau_2$ and $\tau_3$ in constant time.
	Similar to the series-composition, we can compute the upward $2$-slope spirality of the resulting upward $2$-slope representation $U_{G_\mu}$
	based on its categories and~$\sigma_i$, $i \in \set{1, 2, 3}$.
	More precisely, for this purpose we not only stored each $\sigma_i$ but also the respective values $\n(P_l)$ and $\n(P_r)$ to compute them.
	Therefore, we can take the appropriate values from the two 2-slope representations of $U_{G_{i}}$, $i \in \set{1, 2, 3}$,
	that are on the outer face. 
	Overall, we get that by iterating once over $\cF_{\mu_1}$ we can construct all feasible tuples of $\cF_\mu$ in $\Oh(n)$ time. 
	
	The case where $\mu$ has four children is only possible in the final composition with the reference edge $e$.
	Here the algorithms works along the same line with the difference that the half edges $e_u$ and $e_v$ are replaced with $e$.
	The case where $\mu$ has two children, works analogously with the difference that for some feasible tuples of $U_{G_1}$
	there can now be more compatible upward spirality values for $U_{G_2}$ than above. 
	However, these are still $\Oh(1)$ many and the running time is thus not affected.
\end{proof}

\begin{figure}[htb]
  \centering
  \includegraphics{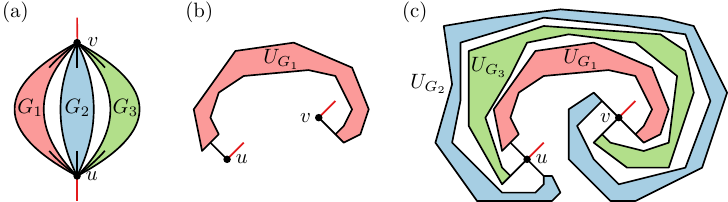}
  \caption{Illustration of a parallel compositions with three children and how picking an upward $2$-slope representation of $G_1$ 
  enforces the spirality of representation of $G_2$ and $G_3$.}
  \label{fig:parallelComposition}
\end{figure}

Lastly, for the root composition we check in $\Oh(n)$ time whether the root of $T$, which is a Q-node representing $e$, 
can be combined with a feasible tuple of its child. 
In the affirmative case, we obtain an upward $2$-slope representation of $G$.
With all compositions described, we can now prove the main theorem of this section. 

\begin{theorem} \label{clm:seriesParallel}
Let $G$ be a biconnected series-parallel digraph with $n$ vertices.
There exists an $\Oh(n^4)$-time algorithm that tests if $G$ admits an upward planar 2-slope drawing and, if so,
that constructs such a drawing.
\end{theorem}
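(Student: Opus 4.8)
The proof assembles the machinery built up in this section into a single post-order dynamic program over the SPQR-tree. First I would iterate over every choice of reference edge $e$ of $G$ (there are $\Oh(n)$ of them) and for each one build the SPQR-tree $T$ of $G$ with respect to $e$ in its canonical form; by the cited result this costs $\Oh(n)$ time per reference edge. For a fixed $T$, I would then compute the feasible set $\cF_\mu$ of every node $\mu$ in a single bottom-up traversal: the base case is the Q-node, whose feasible set has size two as already observed; the inductive cases are handled by \cref{clm:Snode} for S-nodes (in $\Oh(n^2)$ each) and by \cref{clm:Pnode} for P-nodes (in $\Oh(n)$ each); since $G$ is series-parallel there are no R-nodes to treat. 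Finally I would perform the root composition, checking in $\Oh(n)$ time whether the root Q-node can be combined with a feasible tuple of its child, and declare success exactly when this yields a non-empty feasible set, i.e.\ an upward $2$-slope representation of $G$.

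\textbf{Correctness.}
The correctness argument is that the feasible sets faithfully enumerate, up to spirality equivalence, all upward $2$-slope representations of each pertinent graph $G_\mu$. I would argue this by induction on $T$: the invariant is that $\cF_\mu$ contains exactly one representative tuple for every achievable pair of pole categories and spirality value of $G_\mu$, which is precisely the information that downstream compositions depend on. \cref{clm:Snode} and \cref{clm:Pnode} establish that two spirality-equivalent children behave identically when plugged into a parent, so keeping a single representative loses nothing; this is what licenses the compression that keeps the sets of size $\Oh(n)$. By \cref{clm:plane}, a non-empty feasible set at the root is equivalent to $G$ admitting an upward planar $2$-slope drawing, so ranging over all reference edges and all embeddings represented by the corresponding SPQR-trees is exhaustive. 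If some reference edge succeeds, the stored representations can be composed top-down to recover a concrete $2$-slope representation $U_G$, to which the linear-time construction of \cref{clm:plane} applies to produce an actual drawing.

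\textbf{Running time and the main obstacle.}
For the running time I would bound the work per reference edge: summing the per-node costs, the S-nodes dominate at $\Oh(n^2)$ each, and since $T$ has $\Oh(n)$ nodes the traversal costs $\Oh(n^3)$ per reference edge. Multiplying by the $\Oh(n)$ choices of reference edge gives the claimed $\Oh(n^4)$ bound; the final reconstruction and the application of \cref{clm:plane} are lower-order. The step I expect to require the most care is verifying that the feasible-set sizes genuinely stay in $\Oh(n)$ after the compression — this rests on the bound that the turn number, and hence the upward $2$-slope spirality, of any $U_{G_\mu}$ takes only $\Oh(n)$ distinct values, so that collapsing to one representative per spirality-and-pole-category class is what prevents an exponential blow-up across the series compositions. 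A secondary subtlety is confirming that iterating over reference edges really captures all upward planar embeddings of $G$; here I would lean on the standard fact that the SPQR-trees taken over all reference edges represent every planar embedding, combined with the observation that the pole-category bookkeeping in the feasible tuples already encodes the relevant rotational freedom at the poles.
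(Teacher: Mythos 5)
Your proposal matches the paper's proof: the same post-order dynamic program over the canonical SPQR-tree, with Q-nodes as base case, \cref{clm:Snode} and \cref{clm:Pnode} for the compositions, the root composition check, \cref{clm:plane} for the final drawing, and the identical $\Oh(n^3)$ per reference edge times $\Oh(n)$ reference edges accounting. Your added remarks on why spirality equivalence justifies keeping one representative per class are a sound elaboration of what the paper leaves implicit, but the approach is the same.
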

\begin{proof}
Let $e$ be an edge of $G$. Compute the (canonical) SPQR-tree $T$ with respect to~$e$ of~$G$, which can be done in $\Oh(n)$ time~\cite{GM01,DGL10}.
In a post-order traversal of $T$, the algorithm computes the feasible set for every node of $T$. 
If the algorithm arrives at a node with an empty feasible set, its starts with another reference of $G$.
Otherwise, the algorithm stops when it has constructed a feasible tuple for $G$.
We can then use \cref{clm:plane} to construct an upward planar 2-slope drawing.
For one reference edge, this takes at most $\Oh(n^2)$ time per node by \cref{clm:Snode,clm:Pnode}.
and since the size of $T$ is linear in $n$, at most $\Oh(n^3)$ time in total.
The total running time is thus in $\Oh(n^4)$.
\end{proof}

In \cref{sec:nonbi} we explain how to handle non-biconnected series-parallel digraphs.

\subsection{Biconnected digraphs} \label{sec:digraph} 
We extend the algorithm for biconnected series-parallel digraphs to general biconnected digraphs following again Didimo \etal~\cite{DGL10}. 
The upward planarity check is again combined with finding a 2-slope representation. 
Let $G$ be a biconnected digraph.
Let~$T$ be the SPQR-tree of $G$ with respect to a reference edge $e$.
The algorithm computes again the feasible sets of the nodes of $T$ in a post-order traversal.
For Q-, S-, or P-nodes this works as before. 
Recall that to compute a feasible tuple of an S-node or P-node 
it suffices to look at the pole categories and upward spirality of its children.
For R-node this connection is not as clear and we rely thus on a brute-force approach.
More precisely, we compute the feasible set by considering all possible combinations of tuples for each virtual edge of $\skel(\mu)$ to construct $U_{G_\mu}$.
If substitutions are successful, we have to check upward planarity and the existence of bad edges.

\begin{lemma} \label[lemma]{clm:Rnode}
Let $G$ be a biconnected digraph with $n$ vertices and $T$ be an SPQR-tree of $G$.
Let $\mu$ be an R-node of $T$ with children $\mu_1, \ldots, \mu_k$. 
Let $d$ be the diameter of $G_\mu$.
Given the feasible sets $\cF_{\mu_1}, \ldots, \cF_{\mu_k}$, 
the feasible set $\cF_{\mu}$ can be computed in $\Oh(d^k n^2)$ time.
\end{lemma}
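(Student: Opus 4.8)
The plan is to follow the brute-force approach outlined in the paragraph preceding the lemma, exploiting the fact that an R-node skeleton $\skel(\mu)$ is triconnected and thus has only two planar embeddings (the chosen one and its mirror via flip). Since $\skel(\mu)$ is rigid, the combinatorial structure is essentially fixed; what remains free is the choice, for each virtual edge $e_i$, of which feasible tuple $\tau_i \in \cF_{\mu_i}$ to substitute into that edge. First I would enumerate, for each of the $k$ virtual edges of $\skel(\mu)$, the feasible tuples in $\cF_{\mu_i}$. The key observation driving the running time is that each feasible set $\cF_{\mu_i}$ has $\Oh(d)$ entries, since by the remark following the definition of spirality the number of distinct spirality values of $U_{G_{\mu_i}}$ is bounded by the diameter $d$ of $G_\mu$, and the pole categories contribute only a constant factor. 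Hence there are $\Oh(d^k)$ combinations of tuples to consider across all virtual edges.

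For each such combination, I would substitute the chosen $2$-slope representations $U_{G_{\mu_i}}$ into the corresponding virtual edges of $\skel(\mu)$ to attempt to build a candidate $2$-slope representation $U_{G_\mu}$. The substitution must be checked for geometric and combinatorial consistency: the pole categories $t_{s_i}, t_{t_i}$ of each substituted component must be compatible with the slopes demanded by the incident edges of the rigid skeleton, and the resulting drawing must remain upward planar with no bad edge introduced at the gluing vertices. Concretely, I would verify that the slopes assigned around each pole of each virtual edge agree with the switch structure forced by the fixed embedding of $\skel(\mu)$, and then invoke the characterisation of \cref{clm:plane}: the composed representation is valid if and only if it contains no bad edge. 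Checking upward planarity and scanning for bad edges over the assembled structure takes $\Oh(n)$ time, and computing the resulting spirality $\sigma(U_{G_\mu})$ and the pole categories $t_u, t_v$ of $\mu$ from the substituted pieces is a local computation analogous to the series composition in \cref{clm:Snode}.

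The running time then follows by accounting: there are $\Oh(d^k)$ combinations, and for each we spend $\Oh(n)$ time on the feasibility and bad-edge check plus the spirality computation, giving $\Oh(d^k n)$; the extra factor of $n$ in the claimed $\Oh(d^k n^2)$ bound comes from the need to deduplicate feasible tuples so that $\cF_\mu$ retains only one representative per spirality-equivalence class, which may involve a scan over the accumulated tuples for each candidate. Finally, each surviving candidate is stored in $\cF_\mu$ if no spirality-equivalent tuple is already present, exactly as in the S- and P-node cases.

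The main obstacle I anticipate is establishing that substituting a fixed choice of child representations into the rigid skeleton yields a correct $2$-slope representation precisely when no bad edge arises — that is, proving the substitution is sound. Unlike the S- and P-nodes, where spirality and pole categories alone certify compatibility, the R-node has no such clean compositional rule (this is exactly why the proof resorts to brute force), so the correctness argument must rely on the fact that $\skel(\mu)$ has an essentially unique embedding and that, once every virtual edge is realised by a concrete $U_{G_{\mu_i}}$, the only possible obstruction is a locally detectable bad edge at a pole. Carefully verifying that no incompatibility can be hidden away from the gluing vertices — and that the spirality of the composite is correctly recoverable from the rigid layout together with the child spiralities — is where the real work lies.
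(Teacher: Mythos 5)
There is a genuine gap in your accounting of the per-combination check, and it hides the real content of the proof. You assert that, once a combination of child tuples has been substituted into the rigid skeleton, ``checking upward planarity and scanning for bad edges over the assembled structure takes $\Oh(n)$ time,'' and you then recover the stated $\Oh(d^k n^2)$ bound by charging an extra factor of $n$ to deduplication of spirality-equivalent tuples. Both halves of this are wrong. A triconnected skeleton has a unique \emph{planar} embedding (up to mirroring), but that does not make the composed plane digraph \emph{upward} planar: upward planarity of a plane digraph is a global condition requiring a consistent assignment of sources and sinks (large switches) to faces, and it is not certified by the absence of bad edges nor detectable by a local scan at the poles. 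The paper's proof invokes the flow-based upward planarity test of Bertolazzi~\etal\ on the partially substituted representation $U_{G_\mu}'$ (with the switch-to-face assignment fixed inside the substituted parts), and this test runs in $\Oh(n^2)$ time --- that is where the $n^2$ in the bound comes from. Deduplication, by contrast, costs $\Oh(1)$ per candidate with a hash table keyed on spirality and pole categories, exactly as in the P-node case, so it cannot supply the missing factor.

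Relatedly, your closing claim that ``the only possible obstruction is a locally detectable bad edge at a pole'' conflates three separate checks that the paper performs in sequence: (i) compatibility of the substituted representations at each pole of each child, (ii) upward planarity of the composed plane digraph via the flow test, and (iii) absence of bad edges anywhere in the resulting representation (including between non-virtual skeleton edges --- indeed the paper first rules out a bad edge of $\skel(\mu)$ formed by three non-virtual edges, which would make $\cF_\mu$ empty outright). The enumeration scheme itself --- $\Oh(d)$ tuples per child by the spirality bound, hence $\Oh(d^k)$ combinations, with backtracking --- matches the paper, but without the upward planarity test per combination the argument is not sound, and with it your running-time analysis needs to be redone.
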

\begin{proof}
    Note that since $\skel(\mu)$ is triconnected, it has a unique planar embedding (up to mirroring),
    which we can compute in $\Oh(n)$ time. 
	Note that, by \cref{clm:plane}, if $\skel(\mu)$ contains a bad edge with respect to three non-virtual edges, 
	then $\cF_{\mu}$ is empty. Moreover, then $G$ admits no upward planar 2-slope drawing and the main algorithm can stop.
	So suppose no such bad edge exists.
	
	Construct a 2-slope representation of $G_\mu$ by substituting virtual edges with the respective 2-slope representations.
	More precisely, for all $i \in \set{1, \ldots, k}$, 
	substitute $e_i$ with the $2$-slope representation $U_{G_{\mu_i}}$ of a feasible tuple in $\cF_{\mu_i}$.
	Let $U_{G_{\mu}}'$ denote the partial upward $2$-slope representation of $G_\mu$ during this process,
	that is, $U_{G_{\mu}}'$ consists of an embedding of $G_{\mu}$ 
	and the 2-slope assignment for all edges of the $U_{G_{\mu_i}}$, for $i \in \set{1, \ldots, k}$.
	At each pole of a child $\mu_i$ in~$U_{G_{\mu}}'$ check 
	whether the $2$-slope representations of the substituted parts are conflicting.
	If this check fails, backtrack and try another feasible tuple.	
	Suppose that it is successful for all poles of all $\mu_i$.
	Then test the upward planarity of $U_{G_\mu}'$ with the flow-based upward planarity algorithm 
	for triconnected digraphs by Bertolazzi \etal~\cite{BDLM94}
	where the assignment of switches to faces is given for the substituted parts
	(derived from $U_{G_{\mu}}'$).
	This flow-based algorithm runs in $\Oh(n^2)$ time.
	
	If $U_{G_\mu}'$ is upward planar, check whether $U_{G_\mu}'$ 
	contains any bad edge and, if not, extend $U_{G_\mu}'$ to a 
	$2$-slope representation $U_{G_{\mu}}$ of $G_\mu$ (compare to Lemma~8.2 by Didimo \etal~\cite{DGL10}).
	Suppose there is an edge $(x, y)$ on the outer face of $U_{G_{\mu}}$ 
	that is the sole outgoing edge of $x$ and the sole incoming edge of $y$.
	Note that the choice of slope for~$(x, y)$ does not influence the spirality of $U_{G_{\mu}}$,
	since the angles formed at $x$ and $y$ always add up to the same value; see \cref{fig:spiralityChoice}.
	Lastly, compute the upward $2$-slope spirality of $U_{G_{\mu}}$ in $\Oh(n)$ time
	to obtain a feasible tuple for $\cF_\mu$.
	By backtracking and trying the remaining feasible tuples of the $\mu_i$,
	we complete the computation of $\cF_\mu$.

	\begin{figure}[htb]
	  \centering
	  \includegraphics[page=2]{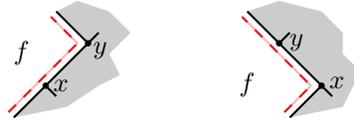}
	  \caption{In an R-node $\mu$, the slope choice for an edge $(x, y)$ that is the sole outgoing edge at $x$ and sole incoming edge at $y$
	  along the outer face does not effect the upward spirality $U_{G_{\mu}}$.}
	  \label{fig:spiralityChoice}
	\end{figure}

	Note that $d$ is an upper bound on the upward $2$-slope spirality of a split component
	and thus each $\mu_i$ has $\Oh(d)$ feasible tuples. 
	Therefore the algorithm tries at most $\Oh(d^k)$ combinations of feasible tuples of the $k$ children of $\mu$.
	Hence the feasible set of an R-node~$\mu$ can be computed in $\Oh(d^k n^2)$ time, 
	where the factor $\Oh(n^2)$ comes from the flow based upward planarity test.
\end{proof}

Let $d$ denote the maximum diameter of a split component of $G$
and let $t$ denote the number of nontrivial triconnected components of $G$, 
i.e., series components, parallel components, and rigid components.
Didimo \etal~\cite{DGL10} have shown that in \cref{clm:Rnode}
we can also bound the running time with $\Oh(d^t n^2)$
and further, in total, compute the feasible sets of all R-nodes in $\Oh(d^t t n^2)$ time.
Recall that the time needed to compute the feasible set of an S-node is bounded by the square of possible spirality values and is thus in $\Oh(d^2)$.
Since we use the canonical form of SPQR-trees there are $\Oh(n)$ S-nodes. 
Therefore, we can compute the feasible sets of all S-nodes of $T$ in $\Oh(d^2n)$ time.
Similarly, the feasible sets of all P-nodes of $T$ can be computed in $\Oh(dt)$ time. 
Lastly, iterating over all SPQR-trees of $G$ for the different choices of reference edges adds another factor of $\Oh(n)$.
If a 2-slope representation of $G$ has been found, we apply again \cref{clm:plane} to compute a drawing.
Hence, we get the following theorem.

\begin{theorem} \label{clm:digraph}
Let $G$ be a biconnected digraph with $n$ vertices.
Suppose that $G$ has at most $t$ nontrivial triconnected components, and that each split component has diameter at most $d$.
Then there exists an $\Oh(d^t t n^3 + dtn +d^2n^2)$-time algorithm that tests if $G$ admits an upward planar 2-slope drawing and, if so,
that constructs such a drawing of $G$.
\end{theorem}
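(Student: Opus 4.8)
The plan is to assemble the three node-level composition lemmas into a bottom-up dynamic program over the SPQR-tree, following Didimo et al.\ but carrying an upward $2$-slope representation together with its pole categories and its upward $2$-slope spirality as the state at each node. First I fix a reference edge $e$ and compute the canonical SPQR-tree $T$ of $G$ with respect to $e$ in $\Oh(n)$ time. I then traverse $T$ in post-order, computing the feasible set $\cF_\mu$ of every node from those of its children: a Q-node is initialised with its two trivial tuples (one per slope), while S-, P-, and R-nodes are processed by \cref{clm:Snode}, \cref{clm:Pnode}, and \cref{clm:Rnode}, respectively. As soon as some $\cF_\mu$ comes out empty --- for instance when an R-node skeleton already contains a bad edge among its non-virtual edges --- I abort the current reference edge. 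Otherwise the final composition with the root Q-node of $e$ produces a feasible tuple of $G$ exactly when $G$ has an upward planar $2$-slope representation with $e$ incident to the outer face. Since the SPQR-tree for a fixed reference edge ranges only over embeddings with $e$ on the outer face, I repeat the entire traversal once per choice of reference edge; this guarantees that some reference edge succeeds whenever any upward planar $2$-slope representation of $G$ exists.

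Correctness rests on the invariant already established by the three lemmas: the pole categories together with the upward $2$-slope spirality form a complete summary for deciding whether two sub-representations can be glued, and every upward $2$-slope representation of a pertinent graph $G_\mu$ is spirality-equivalent to one produced by the composition rules. Hence a non-empty feasible set at the root is equivalent to the existence of a consistent $2$-slope representation $U_G$ of $G$ without any bad edge, and \cref{clm:plane} then converts $U_G$ into an upward planar $2$-slope drawing in $\Oh(n)$ time; testing and construction are thus delivered by the same pass.

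The running time is where the bookkeeping must be done carefully, and it is also the step I expect to be the main obstacle. The key quantitative fact is that the turn number of a face-boundary path is bounded by the diameter, so each split component admits only $\Oh(d)$ distinct spirality values and every feasible set has size $\Oh(d)$ rather than $\Oh(n)$. This sharpens the per-node costs of \cref{clm:Snode} and \cref{clm:Pnode}: an S-node costs $\Oh(d^2)$ by pairing the $\Oh(d)$ tuples of its two children, and with $\Oh(n)$ S-nodes in the canonical form these sum to $\Oh(d^2 n)$; each of the at most $t$ P-nodes costs $\Oh(d)$, since fixing one child's tuple pins the other children's spirality to an $\Oh(1)$-sized window found by hashing, so they sum to $\Oh(dt)$. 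An R-node with $k$ children costs $\Oh(d^k n^2)$ by \cref{clm:Rnode}, the $\Oh(n^2)$ factor coming from the flow-based planarity check. The delicate point is the combined cost of all R-nodes: naively summing the per-node bound overcounts, because an R-node may have many single-edge children, but in the rigid skeleton's essentially unique embedding the slopes of single-edge children are forced once the nontrivial children are fixed, so only the globally $\le t$ nontrivial components genuinely branch over their $\Oh(d)$ spiralities, and $\sum_i d^{k_i} \le d^{\sum_i k_i} \le d^t$ for $d \ge 2$ gives $\Oh(d^t n^2)$. Summing over node types and over the $\Oh(n)$ reference edges then yields $\Oh(d^t n^3 + d^2 n^2 + dtn)$, into which the final $\Oh(n)$ drawing step is absorbed; confirming that the $\Oh(d)$-sized feasible sets mesh correctly with the quadratic planarity test inside \cref{clm:Rnode} is the one remaining routine check.
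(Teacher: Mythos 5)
Your proposal follows essentially the same route as the paper: a bottom-up computation of feasible sets over the canonical SPQR-tree via \cref{clm:Snode}, \cref{clm:Pnode}, and \cref{clm:Rnode}, repeated over all $\Oh(n)$ choices of reference edge, with the same per-node-type accounting ($\Oh(d^2)$ per S-node over $\Oh(n)$ S-nodes, $\Oh(d)$ per P-node over at most $t$ P-nodes, and $\Oh(d^kn^2)$ per R-node summing to $\Oh(d^tn^2)$) and a final application of \cref{clm:plane}. The additional care you take in justifying the combined $\Oh(d^tn^2)$ bound across all R-nodes only makes explicit what the paper leaves implicit, so the argument is correct and matches the paper's proof.
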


\subsection{General digraphs}\label{sec:nonbi} 
So far we have seen how to test whether a biconnected digraph admits a 2-slope representation.
For a general digraph~$G$, even if each of its biconnected components, called a \emph{block}, has a 2-slope representation 
we may not be able to join the blocks; see \cref{fig:general:nonMerging}.
In fact, even if all blocks of~$G$ being upward planar does not imply that~$G$ is upward planar~\cite{HL07}. 
Nonetheless, following Healy and Lynch~\cite{HL07}, our strategy is to test for each block 
if its admits a 2-slope representation under some special conditions 
and, in the affirmative, join these representations.

\begin{figure}[htb]
  \centering
  \includegraphics{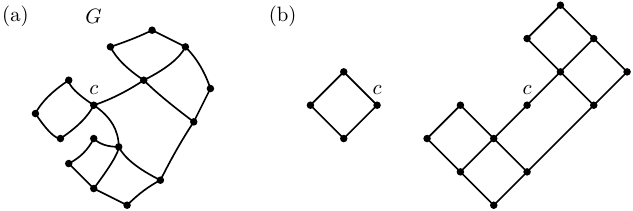}
  \caption{$G$ does not admit a 2-slope representations (a), even though its blocks do (b).}
  \label{fig:general:nonMerging}
\end{figure}

When we want to merge the representations of two blocks, we have to take two things into consideration.
Namely, we have to test whether their joined cut vertex~$c$ is on the outer face for one of the two blocks 
and whether their 2-slope representations fit together at~$c$ (just like poles and their pole categories). 
Before we get into detail on this, we recall what a block-cut tree is 
and explain how it gives a suitable order to process the blocks.    

\paragraph{Block-cut tree.}
The \emph{block-cut tree} $\cT$ of $G$ contains a vertex for each block and for each cut vertex, 
and an edge between a cut vertex $c$ and each block that contains $c$.
For $G$ to admit a 2-slope representation, 
we must be able to root $\cT$ at a block (with all edges oriented towards this root block)
such that an edge $\set{B, c}$ between a block $B$ and cut vertex $c$ is oriented towards $c$
only if $B$ admits a 2-slope representation where $c$ is on the outer face 
(see \cref{fig:blockCutTree} and compare to Lemma 3 by Chan~\cite{Cha04}). 
Note that if we can root $\cT$ at a block $B'$, then any other block $B$ has exactly one outgoing edge
and $B'$ has only incoming edges.
For a block $B$ with outgoing edge to a cut vertex $c$,
we say $B$ is a \emph{block with respect to} $c$.  
Note that multiple blocks can be a block with respect to the same cut vertex. 

\begin{figure}[htb]
  \centering
  \includegraphics{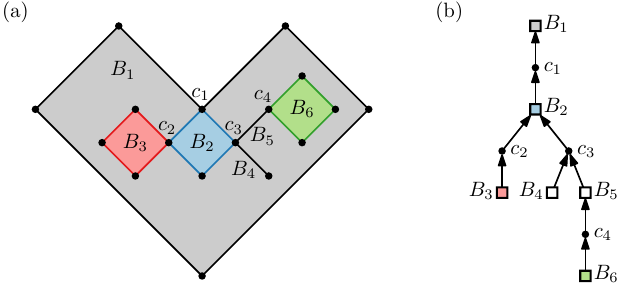}
  \caption{A digraph with six blocks and 2-slope representation (a) and the corresponding rooted block-cut tree (b).}
  \label{fig:blockCutTree}
\end{figure}

Suppose we would know how to root $\cT$. 
With a post-order traversal of $\cT$, we could then try to find 
a 2-slope representation for each block $B$ with respect to $c$ of $\cT$ 
that has~$c$ on the outer face. 
However, a priori we do not know at which block to root $\cT$.
Hence, our algorithm works as follows.

\paragraph{Algorithm.}
Given $G$, we can find its cut vertices and blocks and construct its block-cut tree $\cT$ in $\Oh(n)$ time~\cite{Tar72}.
Since we do not know what block can function as root of~$\cT$ yet, 
we start at the leaves of $\cT$ and work ``inwards''. 
Note that a leaf block $B$ has only one edge~$\set{B, c}$ in $\cT$ to a cut vertex $c$ (unless $B = G$) 
and we can thus provisionally direct~$\set{B, c}$ to $c$. 
This yields a block with respect to a cut vertex -- $B$ with respect to $c$.
Furthermore, during the algorithm, 
for at least one non-leaf block~$B$ either all or all but one of its neighboring blocks have been handled.
We then direct each edge~$\set{B, c'}$, where all other blocks adjacent to~$c'$ have been handled, towards~$B$.
If no undirected edge incident to~$B$ remains, then all neighboring blocks of~$B$ have been handled and~$B$ is the root.
Otherwise, there remains exactly one undirected edge~$\set{B, c}$.
Therefore, during this ad hoc post-order traversal of $\cT$, 
we can ensure that we always have at least one block~$B$ that is the root or for which we can provisionally direct 
the last undirected edge incident to~$B$ towards a cut vertex~$c$ in~$\cT$
i.e.,~$B$ becomes a block with respect to~$c$.

To process a block $B$ with respect to $c$, we check
whether $B$ has a 2-slope representation~$U_B$
with (i) $c$ on the outer face and (ii) additional constraints 
on the angles formed at all other cut vertices of $B$, which we describe in detail below.
If this is the case, then we can finalize the direction of the edge $\set{B, c}$ towards $c$.
Otherwise, if $B$ does not admit such a 2-slope representation, then~$B$ has to be the root of $\cT$.
We then orient all edges of $\cT$ towards $B$ and continue in the remaining part of $\cT$.
If we later find that another block also needs to be the root of $\cT$, 
then $G$ does not admit a 2-slope representation.
Furthermore, when we arrive back at $B$, 
we have to test whether~$B$ admits a 2-slope representation at all.

Note that for a block $B$ with respect to $c$, where $c$ has to be on the outer face of the 2-slope representation $U_B$,
the algorithms from the previous two sections only have to consider an SPQR-tree
with an edge incident to $c$ as reference edge.

\paragraph{Angles of cut vertices.}
We now describe the additional constraints 
that have to be checked for a block $B$ at all of $B$'s cut vertices.
More precisely, let $B$ be a block with respect to $c$ (or the root)
and let $c'$ be any other cut vertex of $B$ if it has any.
Depending on the degree of $c$ (and $c'$) in $B$ and in the neighboring blocks of $B$, 
we have the following extra conditions on the angles at $c$ and $c'$ in $U_B$ of $B$.

\begin{itemize}
\item Suppose $c$ (or $c'$) has degree one in $B$. Then $B$ is a single edge, 
	$c$ is automatically on the outer face in any 2-slope representation of $B$, 
	and the angle at $c$ in $U_B$ is insignificant.
\item Suppose $c$ has degree two in $B$ and $c$ has degree two in another block $B_1$; see \cref{fig:blockJoiningAngles}~(a). 
	Then $c$ has to have a large angle on the outer face in $U_B$,
	since otherwise it would not be able to attach to $U_{B_1}$ such that $B_1$ is in the outer face of~$B$. 
	If the same case applies to~$c'$, 
	then $c'$ has to have a large angle in $U_B$, but not necessarily on the outer face.
\item Suppose $c$ has degree two in $B$ and $c$ has degree one in two other blocks $B_1$ and~$B_2$.
	Then we first test if $B$ admits a 2-slope representation where $c$ has a large angle on the outer face; see \cref{fig:blockJoiningAngles}~(b).
	In this case, both $B_1$ and $B_2$ lie in the outer face of $B$. 
	Otherwise, if $c$ has indegree one and outdegree one in $B$, 
	we test whether~$B$ admits a 2-slope representation where $c$ forms a flat angle on the outer face; see \cref{fig:blockJoiningAngles}~(c).
	We test once such that $B_1$ lies on the outer face of $B$ and once for~$B_2$.
	In the former case, $B_2$ would lie in the interior of $B$ and thus $\set{B_2, c}$ would be directed as $(B_2, c)$;
	in the latter case, $B_1$ would lie in the interior of $B$ and thus $\set{B_1, c}$ would be directed as $(B_1, c)$.
	If neither is possible, then $B$ has to be the root. 
	For $c'$ there are no restrictions under these conditions.
\item The case where $c$ (or $c'$) has degree two in $B$ and degree one in exactly one other block is similar to the previous case but simpler.
\item Suppose $c$ has degree three in $B$; see \cref{fig:blockJoiningAngles}~(d). Then $c$ has to have a flat angle at $c$ on the outer face.
	Otherwise $B$ has to be the root.
	There are again no restrictions for~$c'$ under these conditions. 
\end{itemize}

\begin{figure}[htb]
  \centering
  \includegraphics{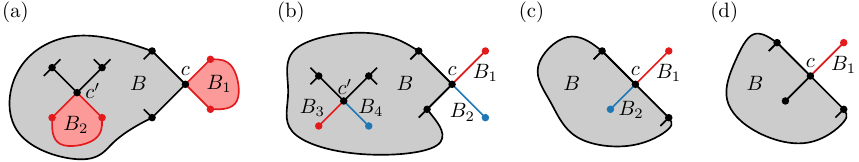}
  \caption{Conditions on the angles at $c$ and $c'$ in $U_B$ for block a $B$ with respect to $c$.}
  \label{fig:blockJoiningAngles}
\end{figure}

These conditions are clearly necessary and, following Healy and Lynch~\cite{HL07}, also sufficient.
Furthermore, they can easily be tested by the algorithms from the previous sections,
where we (as observed above) can use an edge $e$ incident to $c$ as reference edge. 
More precisely, if~$c$ has degree two in~$B$, then its two incident edges are merged in the root composition.
Hence, at this step we only allow a merge with the desired angle at~$c$,
that is, we check if there there is 2-slope representation of the child node of~$e$ with suitable upward spirality.
Otherwise, if~$c$ has degree three in~$B$ and has to form a flat angle,
we take the sole outgoing or sole incoming edge of~$c$ in~$B$ as the reference edge for the SPQR-tree.
In the root composition we then only allow a merge when there is the desired flat angle at~$c$ on the outer face.

\paragraph{Result.}
The total running time for our algorithm to test whether a general digraph admits a 2-slope representation
and thus an upward planar 2-slope drawing 
is given by (i) a linear amount for the computation of $\cT$, (ii) the sum of the checks for each block, 
and (iii) a linear amount for merging.
Because each cut vertex lies in at most four blocks,
the running time for (ii) is at most as much as if we tested a biconnected graph of the same size as $G$ once. 
Hence, we get the following results.

\begin{theorem}
Let $G$ be a series-parallel digraph with $n$ vertices.
There exists an $\Oh(n^4)$-time algorithm that tests if $G$ admits an upward planar 2-slope drawing and, if so,
that constructs such a drawing.
\end{theorem}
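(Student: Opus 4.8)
The plan is to reduce the general series-parallel case to the biconnected case of \cref{clm:seriesParallel} by means of the block-cut tree machinery developed in \cref{sec:nonbi}. First I would strip all leaves from $G$ in $\Oh(n)$ time, since by the observation preceding \cref{clm:planar:transitiveEdge} they neither obstruct nor are obstructed by a 2-slope drawing and can be re-attached afterwards. Next I would compute the block-cut tree $\cT$ of $G$ in $\Oh(n)$ time~\cite{Tar72}. The crucial structural fact is that every block of a series-parallel digraph is itself biconnected and series-parallel, so its SPQR-tree contains no R-nodes and the feasible-set algorithm built from \cref{clm:Snode,clm:Pnode} and packaged in \cref{clm:seriesParallel} applies verbatim to each block.

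The core of the argument is the ad hoc post-order traversal of $\cT$ from \cref{sec:nonbi}. Working inwards from the leaves, I would process each block $B$ with respect to a cut vertex $c$ by asking whether $B$ admits a 2-slope representation $U_B$ with $c$ on the outer face and satisfying the angle conditions at all cut vertices of $B$ catalogued in \cref{fig:blockJoiningAngles}. As noted there, these conditions are enforceable inside the feasible-set computation: taking a reference edge incident to $c$ (or the sole incoming/outgoing edge of $c$ when $\deg_B(c)=3$) forces the two edges at $c$ to be merged in a series or root composition, so one simply discards feasible tuples whose composition does not realise the required small, large, or flat angle at $c$ on the outer face. Whenever a block admits no such representation it must become the root of $\cT$; I would then reorient $\cT$ towards it and continue, declaring failure if a second block is forced to be the root. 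Correctness of the whole scheme rests on the fact, due to Healy and Lynch~\cite{HL07}, that the listed angle conditions are not only necessary but also sufficient for the block representations to be joinable at their shared cut vertices. Once a valid 2-slope representation of all of $G$ is assembled, \cref{clm:plane} turns it into an actual upward planar 2-slope drawing in linear time.

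For the running time I would split the blocks into the non-root blocks and the single root block. A non-root block $B$ is always processed with respect to a cut vertex, so only the $\Oh(1)$ reference edges incident to that vertex need be tried; by the per-reference-edge bound of $\Oh(n_B^3)$ from the proof of \cref{clm:seriesParallel} this costs $\Oh(n_B^3)$, and since $\sum_B n_B = \Oh(n)$ the non-root blocks together cost $\Oh(n^3)$. The root block carries no cut-vertex constraint and may require iterating over all $\Oh(n)$ of its reference edges, contributing $\Oh(n^4)$; as there is exactly one root block this term dominates. The re-rooting events reprocess each block at most a constant number of times and so do not change the asymptotics, while the final merge and the leaf restoration are linear. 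Summing yields the claimed $\Oh(n^4)$ bound.

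The step I expect to be the main obstacle is the careful accounting that keeps the total within $\Oh(n^4)$ rather than a naive product of block sizes: I must argue that the expensive, unconstrained reference-edge sweep is incurred only for the unique root block, while every other block benefits from the fixed reference edge forced by its cut vertex. A secondary subtlety is verifying that the angle-filtering really does commute with the series and P-node compositions of \cref{clm:Snode,clm:Pnode}, that is, that restricting feasible tuples at $c$ does not invalidate the spirality bookkeeping; but this follows because the filter only restricts the pole category and the angle at a pole, which is exactly the data those compositions already track.
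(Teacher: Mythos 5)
Your proposal is correct and follows essentially the same route as the paper: the theorem is obtained exactly by combining the block-cut-tree procedure of \cref{sec:nonbi} (ad hoc post-order traversal, the angle conditions at cut vertices, re-rooting when a block fails, and the Healy--Lynch sufficiency argument) with the biconnected series-parallel subroutine of \cref{clm:seriesParallel}. Your running-time accounting is in fact slightly more refined than the paper's, which simply sums the per-block costs (already bounded by $\Oh(n^4)$ since the block sizes sum to $\Oh(n)$), but both analyses reach the same bound.
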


\begin{theorem} 
Let $G$ be a digraph with $n$ vertices. 
Let $t$ be the maximum number of nontrivial triconnected components of a block of $G$, and 
$d$ be the maximum diameter of a split component of a block of $G$. 
Then there exists an $\Oh(d^t t n^3 + dtn +d^2n^2)$-time algorithm that tests 
if~$G$ admits an upward planar 2-slope drawing and, if so,
that constructs such a drawing of~$G$.
\end{theorem}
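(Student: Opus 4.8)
The plan is to assemble the block-cut-tree machinery developed in \cref{sec:nonbi} with the biconnected algorithm of \cref{clm:digraph}, following Healy and Lynch~\cite{HL07}. First I would compute the cut vertices, blocks, and block-cut tree $\cT$ of $G$ in $\Oh(n)$ time with Tarjan's algorithm~\cite{Tar72}. The guiding principle is that $G$ admits a 2-slope representation if and only if $\cT$ can be rooted at some block so that every non-root block $B$ -- which then has a single outgoing edge to a cut vertex $c$ -- admits a 2-slope representation $U_B$ with $c$ on the outer face and satisfying the angle conditions at all cut vertices of $B$ listed before \cref{fig:blockJoiningAngles}. Since the root is not known in advance, I would run the ad hoc post-order traversal of \cref{sec:nonbi}: starting from the leaves of $\cT$ and working inward, at every step there is at least one block $B$ wrt a cut vertex $c$ ready to be processed. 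Beforehand I would strip leaves and reattach them afterwards in linear time, as observed at the start of \cref{sec:variable}.

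To process a block $B$ wrt $c$, I would invoke the biconnected algorithm of \cref{clm:digraph}, restricted to SPQR-trees whose reference edge is incident to $c$ so that $c$ is forced onto the outer face, and I would additionally filter feasible tuples during the series, parallel, and root compositions to enforce the required angle (large, flat, or unconstrained) at $c$ and at every other cut vertex $c'$ of $B$, according to the case analysis on the degrees of $c$ and $c'$ in $B$ and in their neighboring blocks. If such a representation exists, I orient $\set{B,c}$ towards $c$ and continue; otherwise $B$ must be the root, so I orient all remaining edges of $\cT$ towards $B$ and proceed in the rest of $\cT$. If a second block is forced to be the root, or if the eventual root block admits no 2-slope representation at all, then $G$ admits no upward planar 2-slope drawing. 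That these angle conditions are both necessary and sufficient as join conditions is exactly the content of Healy and Lynch~\cite{HL07} (cf.\ Chan~\cite{Cha04}), so I would cite it rather than reprove it.

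Once every block carries a compatible 2-slope representation with its cut vertices placed and angled as required, I would merge them in the rooted order of $\cT$ into a single 2-slope representation $U_G$ of $G$; the angle bookkeeping guarantees that neighboring blocks fit together at a shared cut vertex without conflict, and the degree restrictions make each merge $\Oh(1)$. Feeding $U_G$ into \cref{clm:plane} then produces an upward planar 2-slope drawing of $G$ in $\Oh(n)$ time.

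For the running time, computing $\cT$ and performing the final merge each cost $\Oh(n)$, so the total is dominated by the sum, over all blocks $B$, of the per-block cost of \cref{clm:digraph}, namely $\Oh(d^{t_B} n_B^3 + d\,t_B n_B + d^2 n_B^2)$ where $n_B$ and $t_B$ are the size and nontrivial-triconnected-component count of $B$. The main obstacle I expect is this running-time accounting rather than correctness. I would use that the blocks partition the edges of $G$ so that $\sum_B n_B = \Oh(n)$, together with the superadditivity estimates $\sum_B n_B^3 \le \big(\sum_B n_B\big)^3$ and $\sum_B n_B^2 \le \big(\sum_B n_B\big)^2$, and the bounds $t_B \le t$, $n_B \le n$. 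These give $\sum_B d^{t_B} n_B^3 \le d^{t}\big(\sum_B n_B\big)^3 = \Oh(d^t n^3)$, $\sum_B d\, t_B n_B \le d\,t \sum_B n_B = \Oh(dtn)$, and $\sum_B d^2 n_B^2 \le d^2 \big(\sum_B n_B\big)^2 = \Oh(d^2 n^2)$, matching the claimed $\Oh(d^t n^3 + dtn + d^2 n^2)$. The step needing the most care is confirming that restricting non-root blocks to reference edges incident to $c$ (which actually lowers their cost to $\Oh(1)$ reference-edge choices) and paying the full reference-edge factor only for the at-most-once-rooted blocks does not hide a larger constant, so that the $n^3$ factor already present in \cref{clm:digraph} still absorbs the reference-edge iteration after aggregation.
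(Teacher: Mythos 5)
Your proposal matches the paper's argument: block-cut tree via Tarjan, ad hoc post-order traversal from the leaves with at-most-one forced root, per-block invocation of the biconnected algorithm restricted to reference edges incident to the outgoing cut vertex, angle filtering at cut vertices with sufficiency delegated to Healy and Lynch, and a final merge followed by \cref{clm:plane}. Your explicit superadditivity accounting for the per-block costs is a detail the paper leaves implicit ("the sum of the checks for each block"), but the approach is essentially identical.
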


\section{Phylogenetic networks} \label{sec:phynet} 
Recall from \cref{sec:introduction} that a phylogenetic network is a single-source digraph whose sinks are all leaves 
and whose non-sink, non-source vertices have degree three.
In this section we show how to find an upward planar 2-slope drawing of a phylogenetic network~$N$
such that its leaves lie on a horizontal line -- if~$N$ admits such a drawing.
Since we want that all leaves are on the outer face, we first merge them into a single vertex 
and then apply the linear-time algorithm of Bertolazzi \etal~\cite{BDMT98} to test whether the resulting digraph~$N'$ is upward planar. 
Clearly,~$N'$ is upward planar if and only if~$N$ admits a desired upward planar embedding.
In the affirmative case, let~$N$ now be an upward plane phylogenetic network such that its~$k$ leaves lie on the outer face.
Further assume that~$N$ contains no bad edge or, in this case equivalently, no transitive edge
(unlike the network in \cref{fig:phynet:bend}~(a)).

In \cref{sec:plane}, we constructed an upward planar 2-slope drawing by implementing the refinement step, 
which augments all faces to rectangular faces, and by applying a compaction algorithm~\cite{KKM01}. 
In order to obtain a drawing where all leaves lie on the same horizontal line,
we apply this algorithm to the following augmentation~$\bar{N}$ of~$N$.
Let~$l_1, l_2, \ldots, l_k$ be the leaves of~$N$ in clockwise order around the outer face.
Add new vertices~$v_1, v_2, \ldots, v_{k-1}$ 
and edges~$e_i = (l_i, v_i)$ and~$e'_{i}= (l_{i+1}, v_i)$, $i \in \set{1, \ldots, k-1}$; see \cref{fig:phynet}~(a).
Then apply \cref{clm:plane} to $\bar{N}$ to obtain an upward planar 2-slope drawing of~$\bar{N}$ in~$\Oh(n)$ time.

\begin{figure}[htb]
  \centering
  \includegraphics{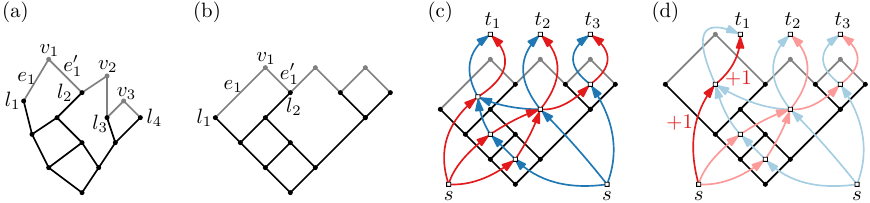}
  \caption{(a) The augmentation $\bar{N}$ of an upward planar phylogenetic network $N$;
  (b) an upward planar 2-slope drawing where $l_1$ and $l_2$ have different y-coordinates;
  (c) the dual flows $G_l$ (red) and $G_r$ (blue); 
  (d) Propagating the length difference of $e_1$ and $e_1'$ through $G_l$.}
  \label{fig:phynet}
\end{figure}

We observe from \cref{fig:phynet}~(b) that two vertices~$l_i$ and~$l_{i+1}$ of~$\bar{N}$ (neighboring leaves of~$N$)
have different y-coordinates if and only if~$e_i$ and~$e_i'$ have different lengths.
This can be fixed by propagating these length differences through the drawing in the following way (\cref{fig:phynet}~(c--d)).
Let~$G$ be the dual graph of~$\bar N$.
Furthermore, define~$G_l$ and~$G_r$ as the two subgraphs of~$G$ 
with~$V(G) = V(G_l) = V(G_r)$ 
and where~$E(G_l)$ and~$E(G_r)$ are the dual edges of primal edges with slope~$\sL$ and~$\sR$, respectively.
In other words,~$G_l$ is the dual graph of~$\bar N$ restricted to edges with slope~$\sL$. 
Direct every edge~$e^*$ in~$E(G_l)$ ($E(G_r)$) with primal edge~$e$ 
from the left (resp. right) face of~$e$ to the right (resp. left) face of~$e$.
Assign to each dual edge flow equal to the length of its primal edge.
Split the vertex corresponding to the outer face of~$\bar N$
into a source~$s$ and~$k-1$ sinks~$t_1, t_2, \ldots, t_{k-1}$ such that~$t_i$ has as incoming edges 
the dual edges of the primal edges~$e_i$ and~$e'_i$; see \cref{fig:phynet}~(c). 
These dual graphs can be constructed in linear time.

Next, to adjust the heights of the leaves of~$N$, 
for every pair~$e_i$ and~$e_i'$, $i \in \set{1, \ldots, k-1}$, if, say,~$e_i'$ is shorter
than~$e_i$, propagate the difference as flow backwards towards~$s$ through~$G_l$; see \cref{fig:phynet}~(d).
With one DFS on~$G_l$ and~$G_r$ each, all leaves can be handled simultaneously and in linear time.
Lastly, since some edge lengths have been changed, update the coordinates of all vertices in~$\bar{N}$ 
and remove the vertices~$v_i$, $i \in \set{ 1, \ldots, k}$, to obtain an upward planar 2-slope drawing of~$N$. 
The following theorem summarises this section. 

\begin{theorem} \label{clm:phynet}
Let $N$ be a phylogenetic network with $n$ vertices and no transitive edge.
If $N$ admits an upward planar drawing with all its leaves on the outer face,
then $N$ admits and upward planar 2-slope drawing such that all its leaves lie on a horizontal line.
Moreover, such a drawing can be constructed in $\Oh(n)$-time.
\end{theorem}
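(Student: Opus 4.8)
The plan is to build the drawing in two stages, exactly mirroring the narrative already set up in \cref{sec:phynet}: first produce \emph{some} upward planar 2-slope drawing of an augmented network $\bar N$, then surgically equalise the $y$-coordinates of the leaves by redistributing edge lengths through a dual-flow argument. Since the theorem's hypothesis is that $N$ admits an upward planar drawing with all leaves on the outer face, I would first reduce to the plane setting: merge all $k$ leaves into a single vertex and apply the single-source upward planarity test of Bertolazzi \etal~\cite{BDMT98} to the resulting $N'$, which runs in $\Oh(n)$ time. The merge is an equivalence, i.e.\ $N'$ is upward planar iff $N$ admits an embedding with all leaves on the outer face, so in the affirmative case I obtain an upward plane $N$ whose leaves $l_1,\dots,l_k$ appear in order around the outer face. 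Because $N$ is a phylogenetic network with no transitive edge, \cref{clm:planar:transitiveEdge} guarantees the absence of bad edges, so $N$ satisfies \ref{case:f3} and \cref{clm:plane} applies.

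Next I would introduce the augmentation $\bar N$: attach, for each consecutive pair of leaves $l_i, l_{i+1}$, a fresh vertex $v_i$ together with edges $e_i = (l_i, v_i)$ and $e'_i = (l_{i+1}, v_i)$, so that each $v_i$ becomes a new local sink on the outer face sitting ``below'' the two adjacent leaves. This augmentation preserves upward planarity and introduces no bad edge (the new edges point into a fresh degree-two sink), so \cref{clm:plane} yields an upward planar 2-slope drawing of $\bar N$ in $\Oh(n)$ time. The key observation, read off the geometry, is that $l_i$ and $l_{i+1}$ share a $y$-coordinate \emph{iff} the two new edges $e_i$ and $e'_i$ have equal length; thus the entire ``leaves on a horizontal line'' requirement is recast as the purely combinatorial condition that every such pair of augmentation edges be equalised in length.

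The heart of the argument is the length-propagation step, which I would carry out on the dual. Following the construction in the excerpt, I form the dual graph of $\bar N$ and split it into the two subgraphs $G_l$ and $G_r$ carrying the dual edges of $\sL$- and $\sR$-slope primal edges, with each dual edge oriented and weighted by its primal edge's length. Splitting the outer-face vertex into a source $s$ and sinks $t_1,\dots,t_{k-1}$ (so that $t_i$ collects the duals of $e_i$ and $e'_i$) turns the requirement into a flow-adjustment problem: for each $i$, if $e'_i$ is shorter than $e_i$, I push the length deficit backward through $G_l$ toward $s$, lengthening a consistent sequence of $\sL$-edges, and symmetrically through $G_r$. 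The main obstacle — the step I expect to need the most care — is verifying that these simultaneous adjustments are \emph{globally consistent}: that propagating the deficit for one leaf pair neither reintroduces inequalities at previously-fixed pairs nor violates the rectangularity/compaction constraints that keep the drawing a valid 2-slope drawing. The compaction algorithm of Di Battista \etal~\cite[Theorem~5.3]{DETT99} treats the two slope directions independently, which is precisely what makes the $G_l$ and $G_r$ flows decouple; I would lean on this to argue that a single DFS through each of $G_l$ and $G_r$ resolves all $k-1$ pairs simultaneously without conflict, in linear time.

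Finally I would assemble the running time and clean up. Each ingredient — the leaf-merge upward planarity test, the construction of $\bar N$, the \cref{clm:plane} drawing, the dual graphs, and the two DFS propagations — runs in $\Oh(n)$ time, so the total is $\Oh(n)$. After the flows are applied I update all vertex coordinates according to the new edge lengths and delete the auxiliary vertices $v_i$ (and contract the augmentation edges), leaving an upward planar 2-slope drawing of $N$ in which, by the equal-length criterion established above, all leaves now share a common $y$-coordinate and hence lie on a single horizontal line. This yields both the existence claim and the $\Oh(n)$-time construction asserted in the theorem.
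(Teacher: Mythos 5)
Your proposal follows the paper's construction essentially step for step: the leaf-merge reduction with the Bertolazzi~\etal\ upward-planarity test, the augmentation $\bar N$ with the new sinks $v_i$ and edges $e_i, e'_i$, the application of \cref{clm:plane}, the equal-length criterion for coplanar leaves, and the dual-flow length propagation through $G_l$ and $G_r$ via a single DFS each are all exactly the paper's argument, with the same $\Oh(n)$ accounting. The only quibble is that \cref{clm:planar:transitiveEdge} gives ``transitive $\Rightarrow$ bad'' rather than the converse you invoke to rule out bad edges, but the paper asserts the same equivalence for this class without further proof, so this does not separate your argument from the paper's.
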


\begin{figure}[htb]
  \centering
  \includegraphics[page=2]{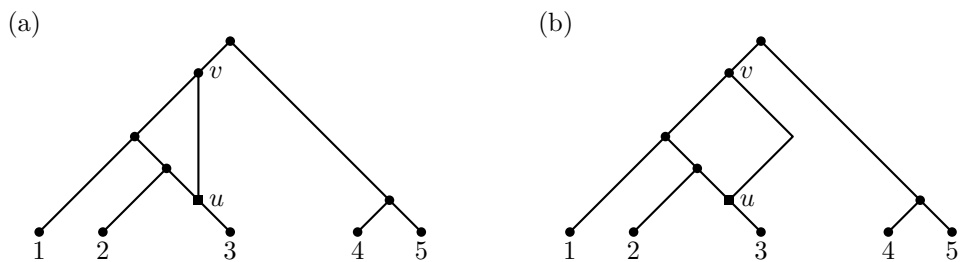}
  \caption{(a) A phylogenetic network with a transitive edge $(u, v)$ does not admit an upward planar 2-slope drawing,
  (b) however, it admits an upward planar 1-bend 2-slope drawing.}
  \label{fig:phynet:bend}
\end{figure}
 
Note that by \cref{clm:planar:singleSource:oneBend} a phylogenetic network with $m$ transitive edges 
admits a upward planar 1-bend 2-slope drawing with at most $m$ bends; see \cref{fig:phynet:bend}.

\section{Concluding remarks} 
When considering the number of slopes in a graph drawing, one typically asks how many different slopes are necessary for a graph of certain graph class.
Here we instead constrained the number of slopes to two and asked what digraphs can then be drawn upward planar.
Our digraphs are thus limited to those that contain no transitive edges and have a small maximum degree.  
Beyond that, the difficulty of the problem depends on whether or not an upward planar embedding is given 
and on the complexity of the digraph.

We have shown that if the embedding is fixed then the question can be answered and, 
in the affirmative, a drawing constructed in linear time. 
In this case the problem boils down to whether there is a bad edge for the given embedding and, if not, to adapt algorithms for orthogonal drawings.
However, even if there are bad edges present, allowing each of them to bend once is enough to obtain an upward planar 1-bend 2-slope drawing
with the minimum number of bends. 
We conjecture that it is NP-hard to minimize the drawing area of an upward planar 2-slope drawing just like it is for orthogonal drawings~\cite{Pat01}.
It would be interesting to see a proof for this and how compaction algorithms for orthogonal drawings can be applied to upward planar drawings. 

If a given digraph is not embedded yet, we first have to check whether the digraph is upward planar.
For single-source digraph, we have seen that it suffices to find one upward planar embedding, which may then be altered to one without bad edges if it exists.
For series-parallel and general digraphs we reused an approach by Didimo \etal~\cite{DGL10} based on SPQR-trees and upward spirality 
to find a quartic time and a fixed-parameter tractable algorithm, respectively.
An important difference is that our algorithm does not only compute upward planar embeddings for nodes of the SPQR-tree but also 2-slope representations.
Through the degree restrictions the algorithm became simpler and can thus also consider other properties.
It would be interesting to see whether the algorithm that computes an upward planar embedding of a single-source digraph can be modified
to directly compute a 2-slope representation. 

This research was motivated by drawings of phylogenetic networks.
While we here assumed that a given phylogenetic network is upward planar, this is not a biologically motivated property of phylogenetic networks.
One may argue that phylogenetic networks often have few reticulations (vertices with indegree two or higher), 
but even just two reticulations suffice to obstruct upward planarity.
Hence, it would be interesting to have algorithms that can also draw non-upward planar phylogenetic network with two slopes. 

The biggest challenges remain for drawings with more than two slopes.
Our feeling is that while the complexity of developing algorithms to draw graphs with two slopes is manageable,
three or more slopes increase the geometric interdependence dramatically.
While the companion paper by Klawitter and Zink~\cite{KZ21} started to investigate this, 
we would be happy to see more results on upward planar slope numbers of graphs.

\section*{Acknowledgements}
We thank the reviewers for their helpful comments and suggestions.

\pdfbookmark[1]{References}{References}
\bibliography{sources}
\bibliographystyle{abbrvurl}

\end{document}